\documentclass[11pt]{article}

\usepackage{amssymb}
\usepackage{amsmath}
\usepackage{amsthm}
\usepackage[dvips]{graphics,color}
\usepackage{epsfig}
\usepackage{a4wide}
\def\E{\mathbb{E}}
\def\var{\text{Var}}

\def\N{N}
\def\W{W}
\def\IW{IW}
\def\B{B}

\def\T{t}
\def\GIW{GIW}
\def\GW{GW}
\def\U{\mathcal{U}}
\newtheorem{thm}{Theorem}
\newtheorem{lem}{Lemma}

\begin{document}

\title{Multivariate stochastic volatility using state space models}
\author{K. Triantafyllopoulos\footnote{Department of Probability and Statistics,
Hicks Building, University of Sheffield, Sheffield S3 7RH, UK,
Email: {\tt k.triantafyllopoulos@sheffield.ac.uk}, Tel: +44 114 222
3741, Fax: +44 114 222 3759.}\\ Department of Probability and
Statistics, University of Sheffield, Sheffield, UK}

\date{\today}

\maketitle

\begin{abstract}

A Bayesian procedure is developed for multivariate stochastic
volatility, using state space models. An autoregressive model for
the log-returns is employed. We generalize the inverted Wishart
distribution to allow for different correlation structure between
the observation and state innovation vectors and we extend the
convolution between the Wishart and the multivariate singular beta
distribution. A multiplicative model based on the generalized
inverted Wishart and multivariate singular beta distributions is
proposed for the evolution of the volatility and a flexible
sequential volatility updating is employed. The proposed algorithm
for the volatility is fast and computationally cheap and it can be
used for on-line forecasting. The methods are illustrated with an
example consisting of foreign exchange rates data of 8 currencies.
The empirical results suggest that time-varying correlations can be
estimated efficiently, even in situations of high dimensional data.

\textit{Some key words:} volatility, multivariate, GARCH, time
series, state space model, Bayesian forecasting, dynamic linear
model, Kalman filter, generalized Wishart distribution.

\end{abstract}

\section{Introduction}\label{model}

Consider that the $p$-variate time series $\{y_t\}_{t=1,\ldots,N}$
is generated from the multivariate state space model
\begin{equation}\label{model2}
y_t=\theta_t+\Sigma_t^{1/2}\epsilon_t\quad \textrm{and} \quad
\theta_t = \phi\theta_{t-1}+\Omega_t^{1/2}\omega_t,
\end{equation}
where the innovations $\{\epsilon_t\}_{t=1,\ldots,N}$ and
$\{\omega_t\}_{t=1,\ldots,N}$ are individually and mutually
uncorrelated, following the $p$-variate Gaussian distributions
$\epsilon_t\sim \N_p(0,I_p)$ and $\omega_t\sim \N_p(0,I_p)$, for
$\Sigma_t^{1/2}$ being the symmetric square root of $\Sigma_t$
(Gupta and Nagar, 1999) and $I_p$ denotes the $p\times p$ identity
matrix. Typically, at time $t$, $y_t$ will represent the log-returns
of some assets or exchange rates or any other financial time series.
$\Sigma_t$ is the volatility matrix at time $t$ and interest is
placed on its estimation, while $\Omega_t$ is a non-negative
definite matrix. An evolutionary law for $\Sigma_t$ and a density
for the initial state $\theta_0$ have to be defined. It is
worthwhile to note that several volatility models can be obtained
from the formulation of model (\ref{model2}). For example, for
$\phi=1$, $\theta_0=\theta$ (with probability 1), and $\Omega_t=0$,
one obtains the volatility model
$y_t=\theta+\Sigma_t^{1/2}\epsilon_t$. Then, depending on the
evolution law for $\Sigma_t$, one can obtain multivariate GARCH
(MGARCH) type models (Bauwens {\it et al.}, 2006) or multivariate
stochastic volatility (MSV) models (Asai {\it et al.}, 2006;
Maasoumi and McAleer, 2006).

The purpose of this paper is to develop an estimation procedure that
will allow fast and efficient estimation of $\Sigma_t$ and
forecasting of $y_t$. Our motivation stems from work on MSV models
that experience problems due to the simulation-based estimation
procedures they use, see e.g. Uhlig (1997), Aguilar and West (2000),
Philipov and Glickman (2006), and references therein. The fast
estimation procedures, proposed in this paper, aim to achieve high
computational savings (which is necessary in high dimensional data)
and yet enjoy the sophistication of stochastic volatility. For this
to commence, one needs to define $\Omega_t$ and to propose an
evolutionary law for $\Sigma_t$. Since $\theta_t$ is unobserved
signal, $\Omega_t$ is suggested to be specified, rather than
estimated from the data, as the latter will require to resort to
simulation-based estimation techniques (e.g. MCMC or EM algorithm),
and this can cause significant delays of the estimation of the
volatility. In this paper we adopt Bayesian estimation, for which we
aim to specify a prior distribution for $\Sigma_0$. Then it is
desirable, in order to develop conjugate analyses that will
facilitate fast estimation, to define $\Omega_t$ to be proportional
to $\Sigma_t$. Indeed, in the context of a time-invariant volatility
$\Sigma_t=\Sigma$, this setting is well known (Harvey, 1989; Durbin
and Koopman, 2001; L\"utkepohl, 2007). In this setting
$\Omega_t=\Omega \Sigma_t$, where $\Omega>0$ is known. However, this
is overly restrictive, as the correlation matrix of
$\Sigma_t^{1/2}\epsilon_t$ is the same as the correlation matrix of
$\Omega_t^{1/2}\omega_t$. In this paper we define
$\Omega_t=\Sigma_t^{1/2}\Omega\Sigma_t^{1/2}$, where now $\Omega$ is
a known non-negative definite matrix (later in the paper we explain
how this matrix can be specified). This setting clearly encompasses
the situation when $\Omega$ is scalar, and it allows more general
and more flexible estimation.

For the volatility covariance matrix $\Sigma_t$, we propose a
multiplicative stochastic law of its precision $\Sigma_t^{-1}$, i.e.
\begin{equation}\label{evol}
\Sigma_{t}^{-1}=k\U(\Sigma_{t-1}^{-1})'B_t\U(\Sigma_{t-1}^{-1}),\quad
t=1,\ldots,N,
\end{equation}
where $k=\{\delta(1-p)+p\}\{\delta(2-p)+p-1\}^{-1}$, for a discount
factor $0<\delta<1$, and $\U(\Sigma_{t-1}^{-1})$ denotes the unique
upper triangular matrix based on the Choleski decomposition of
$\Sigma_{t-1}^{-1}$. Here $B_t$ is a $p\times p$ random matrix
following the multivariate singular beta distribution
$B_t\sim\B(m/2,1/2)$, where $m=\delta(1-\delta)^{-1}+p-1$. Some
details of this distribution can be found in the appendix (see Lemma
\ref{lemma}), but for more details the reader is referred to Uhlig
(1994), D\'{i}az-Garc\'{i}a and Guti\'{e}rrez (1997), and Srivastava
(2003). In Section \ref{s2s2} it is shown that when $\Omega=I_p$ and
with $\E(.)$ denoting expectation, we have
$\E(\Sigma_{t}^{-1}|y^{t-1})=\E(\Sigma_{t-1}^{-1}|y^{t-1})$, while
the respective covariance matrix at $t$ is increased of that at time
$t-1$; in this case $\Sigma_t^{-1}$ follows a random walk. When
$\Omega$ is a covariance matrix, then the evolution (\ref{evol})
suggests approximately a random-walk type process for
$\Sigma_t^{-1}$. The choice of $k$ is made in order to accommodate
the above random walk equations. It should be noted that, if $p=1$,
it is $k=1/\delta$, and (\ref{evol}) is reduced to
$\Sigma_t=\delta\Sigma_{t-1} B_t^{-1}$ (Uhlig, 1994;
Triantafyllopoulos, 2007). In order to accommodate for the
definition of $\Omega_t$, we generalize the Wishart distribution and
we extend its convolution with the multivariate singular beta, which
was first proved in Uhlig (1994). In order to support conjugate
analysis, this generalization is necessary, because of the
definition of $\Omega_t$. Finally, at time $t=0$, $\theta_0$ is
assumed to be uncorrelated with $\{\epsilon_t\}_{t=1,2,\ldots,N}$
and $\{\omega_t\}_{t=1,\ldots,N}$ and it is assumed that
$\theta_0\sim\N_p(m_0,\Sigma_0^{1/2}P_0\Sigma_0^{1/2})$, for some
known prior mean vector $m_0$ and covariance matrix $P_0$. The
scalar constant $\phi$ is assumed known. Compared with existing
MGARCH and MSV models, a major advantage of the proposed
methodology, is that the likelihood function is provided in closed
form. This can facilitate model comparison, but it can also be used
as a means for the choice of the parameters, without the need to
resort to numerical methods in order to maximize the likelihood
function (more details on this are provided via the data analysis in
Section \ref{examples}).

The remaining of the paper is organized as follows. The following
section generalizes the inverted Wishart distribution and discusses
some properties of the new distribution. In Section \ref{s2s2} the
main algorithm of the volatility is developed and Section
\ref{examples} analyzes the volatility of foreign exchange rates
data. The findings of the paper are summarized in Section
\ref{discussion} and the appendix includes all proofs of arguments
in Sections \ref{s2s1} and \ref{s2s2}.

\section{Generalized inverted Wishart distribution}\label{s2s1}

Let $X\sim \IW_p(n,A)$ denote that the matrix $X$ follows an
inverted Wishart distribution with $n$ degrees of freedom and with
parameter matrix $A$. Given $A$, we use the notation $|A|$ for the
determinant of $A$ and the notation $\textrm{etr}(A)$ for the
exponent of the trace of $A$. The following theorem introduces a new
distribution generalizing the inverted Wishart distribution.

\begin{thm}\label{lem1}
Consider the $p\times p$ random covariance matrix $X$ and denote
with $X^{1/2}$ the symmetric square root of $X$. Given $p\times p$
covariance matrices $A$ and $S$ and a positive scalar $n>2p$, define
$Y=X^{1/2}A^{-1}X^{1/2}$ so that $Y$ follows an inverted Wishart
distribution $Y\sim IW_p(n,S)$. Then the density function of $X$ is
given by
$$
p(X)=\frac{|A|^{(n-p-1)/2}|S|^{(n-p-1)/2}}{2^{p(n-p-1)/2}\Gamma_p\{(n-p-1)/2\}
|X|^{n/2}}\textrm{etr}(-AX^{-1/2}SX^{-1/2}/2),
$$
where $\Gamma_p(.)$ denotes the multivariate gamma function.
\end{thm}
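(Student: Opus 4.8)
The plan is to obtain the density of $X$ from the known density of $Y = X^{1/2}A^{-1}X^{1/2} \sim \IW_p(n,S)$ by a change-of-variables argument. First I would write down the inverted Wishart density of $Y$ explicitly: $p(Y) \propto |S|^{(n-p-1)/2}|Y|^{-n/2}\,\textrm{etr}(-SY^{-1}/2)$, with the normalizing constant $2^{p(n-p-1)/2}\Gamma_p\{(n-p-1)/2\}$. Then I would substitute $Y = X^{1/2}A^{-1}X^{1/2}$: the exponential term becomes $\textrm{etr}(-S(X^{1/2}A^{-1}X^{1/2})^{-1}/2) = \textrm{etr}(-SX^{-1/2}AX^{-1/2}/2)$, and using the cyclic property of the trace together with symmetry of $A$, $S$, and $X^{1/2}$, this equals $\textrm{etr}(-AX^{-1/2}SX^{-1/2}/2)$, matching the claimed exponent. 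For the determinant factor, $|Y| = |X^{1/2}A^{-1}X^{1/2}| = |X|\,|A|^{-1}$, so $|Y|^{-n/2} = |X|^{-n/2}|A|^{n/2}$.

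Next I would compute the Jacobian of the map $X \mapsto Y = X^{1/2}A^{-1}X^{1/2}$ on the space of symmetric $p\times p$ matrices. I would break this into two pieces: the map $X \mapsto X^{1/2}$ and the map $Z \mapsto Z A^{-1} Z$ for symmetric $Z$. For the square-root map, the Jacobian of $X \mapsto X^{1/2}$ is a standard result (see Gupta and Nagar, 1999): writing $X^{1/2}$ in terms of its spectral decomposition, the Jacobian factor can be expressed through the eigenvalues of $X$. For the bilinear map $Z \mapsto ZA^{-1}Z$, I would use the known formula that the Jacobian of $Z \mapsto ZCZ$ (or more conveniently factor $A^{-1} = A^{-1/2}A^{-1/2}$ and treat $Z \mapsto A^{-1/2}ZA^{-1/2}$ composed with $Z\mapsto Z^2$-type maps) contributes a power of $|A|$ and a symmetric-function factor in the eigenvalues. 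Combining the two pieces, the eigenvalue-dependent factors should telescope, leaving a clean Jacobian of the form $|A|^{-(p+1)/2}$ (a constant independent of $X$), so that $p(X) = p(Y)\cdot|\partial Y/\partial X|$ picks up exactly the extra $|A|^{-(p+1)/2}$ which, combined with the $|A|^{n/2}$ from the determinant substitution, yields $|A|^{(n-p-1)/2}$ as claimed.

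The main obstacle I anticipate is the Jacobian computation, specifically verifying that the eigenvalue-dependent factors from the square-root map and from the congruence/squaring map cancel to leave only the power of $|A|$. A cleaner route that sidesteps this is to bypass $X^{1/2}$ altogether: note that $Y \sim \IW_p(n,S)$ is equivalent to $Y^{-1} = X^{-1/2}AX^{-1/2} \sim \W_p(n-p-1, S^{-1})$ (Wishart), and then recognize $X^{-1/2}AX^{-1/2}$ as a congruence transformation of the fixed matrix $A$ by the random symmetric matrix $X^{-1/2}$. One can then use the substitution $W = X^{-1}$ (whose Jacobian $|X|^{-(p+1)}$ is standard) together with the Jacobian of $W \mapsto W^{1/2}AW^{1/2}$; the constant $|A|$-powers are forced by homogeneity considerations (tracking how the density scales under $A \mapsto cA$), which determines the exponent of $|A|$ without a full eigenvalue computation. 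Either way, once the Jacobian is pinned down, assembling the determinant powers, the $\Gamma_p$ constant, and the etr term gives the stated formula directly.
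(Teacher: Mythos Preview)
Your direct approach---transforming from the known $\IW$ density of $Y=X^{1/2}A^{-1}X^{1/2}$ back to $X$---differs from the paper's, and the step you correctly flag as the obstacle is precisely where it breaks. The eigenvalue factors do \emph{not} telescope to the constant $|A|^{-(p+1)/2}$. Factoring as $X\mapsto Z=X^{1/2}\mapsto ZA^{-1}Z$, the second map has Jacobian $\prod_{i\le j}(\mu_i+\mu_j)$ with $\mu_i$ the eigenvalues of $A^{-1/2}ZA^{-1/2}$, while the first contributes $\prod_{i\le j}(\xi_i+\xi_j)^{-1}$ with $\xi_i$ the eigenvalues of $Z$; the ratio depends on $X$ and on the individual eigenvalues of $A$, not only on $|A|$. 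A $2\times 2$ check with $A=\textrm{diag}(a_1,a_2)$ at $X=I_2$ gives $|\partial Y/\partial X|=(a_1+a_2)/(2a_1^2a_2^2)$, which is not $(a_1a_2)^{-3/2}$ unless $a_1=a_2$. Your homogeneity argument pins down only the overall power of $|A|$ and cannot exclude an extra $X$- and $A$-eigenvalue-dependent factor, and the alternative route through $W=X^{-1}$ meets the identical obstruction, since $W\mapsto W^{1/2}AW^{1/2}$ is the same type of map.

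The paper avoids this head-on computation by a different device. Rather than working with the theorem's map $X\mapsto X^{1/2}A^{-1}X^{1/2}$, it introduces the auxiliary transformation $X\mapsto X^{1/2}S^{-1}X^{1/2}$, in which $A$ does not appear at all; hence its Jacobian is trivially the same for every choice of $A$. That Jacobian is then identified not from the Olkin--Rubin eigenvalue expression but by specializing to $A=I_p$: there the theorem's hypothesis reduces to $X\sim\IW_p(n,S)$, the transformed variable is taken to be $\IW_p(n,I_p)$, and matching the two known inverted Wishart densities forces $J(Y\to X)=|S|^{(p+1)/2}$. With this Jacobian carried over to general $A$, the claimed density is pushed forward and recognized as the $\IW_p(n,A)$ density, which integrates to one. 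The idea your proposal is missing is this reduction-to-a-special-case argument in place of a direct Jacobian evaluation.
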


The distribution of the above theorem proposes a generalization of
the inverted Wishart distribution, since if $A=I_p$ we have $X\sim
\IW_p(n,S)$ and if $S=I_p$, we have $X\sim \IW_p(n,A)$. This is
clearly a different generalization of other generalizations of the
inverted Wishart distribution, see Dawid and Lauritzen (1993), Brown
{\it et al.} (1994), Roverato (2002), and Carvalho and West (2007).
In the following we refer to the distribution of Theorem \ref{lem1}
as {\it generalized inverted Wishart} distribution, and we write
$X\sim \GIW_p(n,A,S)$. The next result gives some expectations of
the $\GIW$ distribution.
\begin{thm}\label{lem2a}
Let $X\sim \GIW_p(n,A,S)$ for some known $n,A$ and $S$. Then we have
\begin{enumerate}
\item [(a)] $\E(X^{1/2}S^{-1}X^{1/2})=(n-2p-2)^{-1}A$;
$\E(X^{-1/2}SX^{-1/2})=(n-p-1)A^{-1}$; \item [(b)]
$\E|X|^{\ell}=2^{-p\ell}[\Gamma_p
\{(n-p-1)/2\}]^{-1}\Gamma_p\{(n-2\ell-p-1)/2\}|A|^{\ell}|S|^{\ell}$,
\end{enumerate}
where $\E(.)$ denotes expectation and $0<\ell<(n-2p)/2$.
\end{thm}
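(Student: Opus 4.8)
The plan is to exploit the density of the $\GIW$ distribution from Theorem \ref{lem1} directly, reducing each expectation to a known moment of an ordinary inverted Wishart distribution via the defining change of variables $Y = X^{1/2}A^{-1}X^{1/2} \sim \IW_p(n,S)$. First I would record the standard moments of $Y\sim\IW_p(n,S)$: namely $\E(Y^{-1}) = (n-p-1)S^{-1}$ (the mean of the associated Wishart), $\E(Y) = (n-2p-2)^{-1}S$ for $n > 2p+2$, and $\E|Y|^{\ell} = 2^{p\ell}\,[\Gamma_p\{(n-p-1)/2\}]^{-1}\,\Gamma_p\{(n-2\ell-p-1)/2\}\,|S|^{-\ell}$ for $\ell$ in the stated range (this last one is obtained by integrating the inverted Wishart density and recognizing the normalizing constant of another inverted Wishart, exactly as in the Wishart determinant-moment computation). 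These are all classical; see Gupta and Nagar (1999).

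For part (a), the key observation is that $X^{1/2}S^{-1}X^{1/2}$ and $X^{-1/2}SX^{-1/2}$ can be rewritten in terms of $Y$. Since $Y = X^{1/2}A^{-1}X^{1/2}$, we have $Y^{-1} = X^{-1/2}AX^{-1/2}$, and hence $A^{1/2}Y^{-1}A^{1/2}$-type manipulations let me express $X^{-1/2}SX^{-1/2}$ through $Y$ — more cleanly, I would substitute $X^{1/2} = $ (a function of $Y$) is awkward because the symmetric square root does not linearize the quadratic form, so instead I would work at the level of the density: write $\E\big(X^{-1/2}SX^{-1/2}\big) = \int X^{-1/2}SX^{-1/2}\, p(X)\,dX$ and change variables from $X$ to $Y$. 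Under $Y = X^{1/2}A^{-1}X^{1/2}$ one has $X^{-1/2}SX^{-1/2}$ appearing inside $\mathrm{etr}(-AX^{-1/2}SX^{-1/2}/2)$, and the integrand becomes $X^{-1/2}SX^{-1/2}$ times the $\IW_p(n,S)$ density in $Y$ times a factor relating the quadratic form in $X$ to $Y$. The cleanest route is to note that the map $X \mapsto Y$ sends the $\GIW_p(n,A,S)$ density to the $\IW_p(n,S)$ density (this is precisely the content of Theorem \ref{lem1}, read backwards), and under this map the matrix-valued integrand transforms covariantly: $X^{-1/2}SX^{-1/2}$ corresponds to $A^{1/2}\big(Y^{-1}\text{-conjugated}\big)A^{1/2}$. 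Carrying the conjugation through and applying $\E(Y^{-1}) = (n-p-1)S^{-1}$ yields $\E(X^{-1/2}SX^{-1/2}) = (n-p-1)A^{-1}$; the analogous computation with $\E(Y) = (n-2p-2)^{-1}S$ gives the first identity. Part (b) is more direct: $|X|^{\ell}p(X)$ differs from $p(X)$ only by the power of $|X|$ in the denominator, so $\E|X|^{\ell} = \int |X|^{\ell}p(X)\,dX$ equals the ratio of two $\GIW$ normalizing constants — the one with $n$ replaced by $n - 2\ell$ in the exponent of $|X|$ over the one with $n$ — and reading off the $\Gamma_p$ factors and the powers of $|A|,|S|$ gives the stated formula, with the constraint $0<\ell<(n-2p)/2$ exactly the condition for the shifted constant to be finite.

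The main obstacle I anticipate is bookkeeping in the change of variables $X \mapsto Y = X^{1/2}A^{-1}X^{1/2}$: the symmetric square root does not commute with $A$, so one must be careful that the Jacobian and the transformation of the matrix-valued integrand are handled consistently — in particular that $X^{-1/2}SX^{-1/2}$ really does map to the correct $A$-conjugate of a function of $Y$. The safest way to avoid an error is not to perform the substitution by hand at all, but to argue structurally: Theorem \ref{lem1} already tells us that $Y\sim\IW_p(n,S)$ is equivalent to $X\sim\GIW_p(n,A,S)$ with $Y=X^{1/2}A^{-1}X^{1/2}$, so any expectation of a function $g(X)$ equals $\E\{g(h(Y))\}$ where $h$ is the inverse map, and I only need the two quadratic forms $X^{1/2}S^{-1}X^{1/2}$ and $X^{-1/2}SX^{-1/2}$ expressed in $Y$. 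Writing $Z = X^{-1/2}AX^{-1/2}$ one checks $Z = Y^{-1}$ and $X^{-1/2}SX^{-1/2} = A^{1/2}(A^{-1/2}SA^{-1/2}\text{-type})$ — the details here are routine linear algebra once one fixes notation — after which the ordinary inverted Wishart moments finish the argument. Part (b) carries no such subtlety and is pure computation with $\Gamma_p$.
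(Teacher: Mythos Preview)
Your treatment of part (b) is fine and is exactly what the paper does: absorb $|X|^{\ell}$ into the density, recognise the integrand as an unnormalised $\GIW_p(n-2\ell,A,S)$ density, and read off the ratio of normalising constants. The constraint $0<\ell<(n-2p)/2$ is precisely what keeps the shifted density integrable.

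Part (a), however, has a real gap. You pick the transformation $Y=X^{1/2}A^{-1}X^{1/2}\sim\IW_p(n,S)$ and then try to express $X^{1/2}S^{-1}X^{1/2}$ and $X^{-1/2}SX^{-1/2}$ as functions of $Y$. But there is no such clean expression: with $Y=X^{1/2}A^{-1}X^{1/2}$ you get $Y^{-1}=X^{-1/2}AX^{-1/2}$, and neither of these lets you recover $X^{-1/2}SX^{-1/2}$ without knowing $X^{1/2}$ itself, because $X^{1/2}$ does not commute with $A$ or $S$. The ``$A$-conjugate of a function of $Y$'' you gesture at does not exist as a matrix identity; the vague phrase ``$A^{1/2}(A^{-1/2}SA^{-1/2}\text{-type})$'' is where the argument actually breaks. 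Your own diagnosis of the obstacle is correct, but the proposed remedy (change variables in the density integral, or invert $h$) runs into the same wall: the inverse map $X=h(Y)$ has no closed form unless $A$ and $X$ commute.

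The fix is a one-line observation that the paper exploits: use the \emph{other} transformation. By the symmetry $\GIW_p(n,A,S)=\GIW_p(n,S,A)$ (Theorem~\ref{th:commute}), or equivalently by rerunning Theorem~\ref{lem1} with the roles of $A$ and $S$ interchanged, one has directly that
\[
Y \;=\; X^{1/2}S^{-1}X^{1/2} \;\sim\; \IW_p(n,A).
\]
Now the two quantities in part (a) are literally $Y$ and $Y^{-1}$, and the standard inverted Wishart moments $\E(Y)=(n-2p-2)^{-1}A$ and $\E(Y^{-1})=(n-p-1)A^{-1}$ give both identities immediately, with no Jacobian bookkeeping and no conjugation. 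Once you swap which of $A$ and $S$ sits inside the defining quadratic form, the proof of (a) collapses to two lines.
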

The following property reflects on the symmetry of $A$ and $S$ in
the $\GIW$ distribution.
\begin{thm}\label{th:commute}
If $X\sim \GIW_p(n,A,S)$, for some known $n$, $A$ and $S$, then
$X\sim \GIW_p(n,S,A)$.
\end{thm}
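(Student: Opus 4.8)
The plan is to argue directly from the density formula of Theorem~\ref{lem1} rather than from the defining transformation, since the density is already displayed in a form that makes the symmetry almost visible. Writing out $p(X)$ for $X\sim\GIW_p(n,A,S)$, I would first isolate the normalizing constant
$$
\frac{|A|^{(n-p-1)/2}|S|^{(n-p-1)/2}}{2^{p(n-p-1)/2}\Gamma_p\{(n-p-1)/2\}\,|X|^{n/2}},
$$
and observe that it depends on $A$ and $S$ only through the symmetric combination $|A|^{(n-p-1)/2}|S|^{(n-p-1)/2}$, so it is unchanged under the interchange $A\leftrightarrow S$; the admissibility restriction $n>2p$ is likewise symmetric. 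Hence the entire question collapses to the exponential factor $\textrm{etr}(-AX^{-1/2}SX^{-1/2}/2)$.

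For that factor I would write $M=X^{-1/2}$, which is symmetric, so that the exponent is $-\tfrac12\,\textrm{tr}(AMSM)$. The cyclic invariance of the trace then gives $\textrm{tr}(AMSM)=\textrm{tr}(MSMA)=\textrm{tr}(SMAM)=\textrm{tr}(SX^{-1/2}AX^{-1/2})$, whence $\textrm{etr}(-AX^{-1/2}SX^{-1/2}/2)=\textrm{etr}(-SX^{-1/2}AX^{-1/2}/2)$. Combining this with the symmetry of the constant shows that the density of $X$ under $\GIW_p(n,A,S)$ coincides pointwise with the density of $X$ under $\GIW_p(n,S,A)$, which is exactly the claim.

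I do not anticipate a genuine obstacle; the single point requiring care is that the matrices $AX^{-1/2}SX^{-1/2}$ and $SX^{-1/2}AX^{-1/2}$ are not symmetric and need not be equal as matrices, so the identification of the two exponents must be carried out at the level of traces, via cyclicity, rather than of the matrices themselves. If one preferred to avoid appealing to Theorem~\ref{lem1} and argue from the definition, one could instead verify that $Z=X^{1/2}S^{-1}X^{1/2}\sim\IW_p(n,A)$ by a change-of-variables computation together with Theorem~\ref{lem1} applied in the other direction, but this is strictly longer and less transparent than the density route above, so I would keep it only as a remark.
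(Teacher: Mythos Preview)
Your proposal is correct and follows essentially the same route as the paper's own proof: the paper likewise observes that the normalizing constant is symmetric in $|A|$ and $|S|$, and then uses the trace identity $\textrm{tr}(-AX^{-1/2}SX^{-1/2}/2)=\textrm{tr}(-SX^{-1/2}AX^{-1/2}/2)$ to conclude. Your version is slightly more explicit in invoking cyclic invariance and in cautioning that equality holds only at the level of traces, but the argument is identical in substance.
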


We motivate the estimator $\widehat{X}(A,S)$ of $X\sim
\GIW_p(n,A,S)$ as follows. The estimator should be a symmetric
positive definite matrix and for $A$ and $S$ being matrices, one
possibility is $\widehat{X}(A,S)=kA^{1/2}SA^{1/2}$, for a known
constant $k$. This estimator equals the expectation of the inverted
Wishart distribution $\IW_p(k+2p+2,A^{1/2}SA^{1/2})$. Since in
general $A^{1/2}SA^{1/2}\neq S^{1/2}AS^{1/2}$, a similar estimator
for $X$ can be considered as
$\widehat{X}^*(A,S)=k^*S^{1/2}AS^{1/2}$, for some constant $k^*$. We
propose that the desired estimator for $X$ should satisfy the
following requirements:

\begin{description}
\item (1) In the univariate case $(p=1)$ the estimator should be
$\widehat{X}(A,S)=AS/(n-4)$; \item (2) The estimator should be
symmetric in $A$ and $S$, i.e. $\widehat{X}(A,S)=\widehat{X}(S,A)$;
\item (3) If $A=I_p$ the estimator should reduce to the expectation
from the inverted Wishart density $\IW_p(n,S)$, i.e.
$\widehat{X}(A,S)=(n-2p-2)^{-1}S$; If $S=I_p$ the estimator should
reduce to the expectation from the inverted Wishart density
$\IW_p(n,A)$, i.e. $\widehat{X}(A,S)=(n-2p-2)^{-1}A$.
\end{description}

Now we propose the estimator
\begin{equation}\label{sp:estim}
\widehat{X}(A,S)=\frac{1}{2n-4p-4}\left(S^{1/2}AS^{1/2}+A^{1/2}SA^{1/2}\right),
\end{equation}
for which we can see that (1)-(3) are satisfied.

It is also easy to verify that if $X\sim\GIW_p(n,A,S)$, then the
density of $Y=X^{-1}$ is
$$
p(Y) = \frac{|A|^{(n-p-1)/2} |S|^{(n-p-1)/2}
|Y|^{(n-2p-2)/2}}{2^{p(n-p-1)/2}\Gamma_p\{(n-p-1)/2\} }
\textrm{etr}(-AY^{1/2}SY^{1/2}/2).
$$
This distribution generalizes the Wishart distribution; we will say
that $Y$ follows the {\it generalized Wishart} distribution with
$n-p-1$ degrees of freedom, covariance matrices $A^{-1}$ and
$S^{-1}$, and we will write $Y\sim\GW_p(n-p-1,A^{-1},S^{-1})$. It is
easy to see that when $A=I_p$ or $S=I_p$, the above density reduces
to a Wishart density. Again our terminology and notation, should not
cause any confusion with other generalizations of the Wishart
distribution, proposed in the literature (Letac and Massam, 2004).

The next theorem is a generalization of the convolution of the
Wishart and multivariate singular beta distributions (Uhlig, 1994).
For some integers $m,n$, denote with $\B_p(m/2,n/2)$ the
multivariate singular beta distribution with $m$ and $n$ degrees of
freedom. The density of this distribution is given in the appendix
(see Lemma \ref{lemma}) and more details can be found in Uhlig
(1994), D\'{i}az-Garc\'{i}a and Guti\'{e}rrez (1997), and Srivastava
(2003).

\begin{thm}\label{th:uhlig}
Let $p$ and $n$ be positive integers and let $m>p-1$. Let
$H\sim\GW_p(m+n,A,S)$ and $B\sim\B_p(m/2,n/2)$ be independent, where
$A$ and $S$ are known covariance matrices. Then
$$
G\equiv \mathcal{U}(H)'B\mathcal{U}(H) \sim \GW_p(m,A,S),
$$
where $\mathcal{U}(H)$ denotes the upper triangular matrix of the
Choleski decomposition of $H$.
\end{thm}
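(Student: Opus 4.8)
The plan is to reduce the generalized statement to the classical convolution result of Uhlig (1994) by a change of variables. Recall that $Y \sim \GW_p(m,A,S)$ is, by the definition given just above the theorem, equivalent to saying that $X = Y^{-1} \sim \GIW_p(m+p+1,A^{-1},S^{-1})$, and — unwinding Theorem~\ref{lem1} — this in turn says that $A^{-1/2}X^{-1}A^{-1/2}$, i.e. $A^{1/2}YA^{1/2}$ after inversion, behaves like a Wishart-type object in the variable $S$. More usefully for the present purpose: the density of $Y\sim\GW_p(m,A,S)$ displayed in the excerpt is proportional to $|Y|^{(m-p-1)/2}\,\textrm{etr}(-AY^{1/2}SY^{1/2}/2)$. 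So I would first try to exhibit a (deterministic, invertible) map $Y \mapsto \widetilde Y$ that turns a $\GW_p(m,A,S)$ variable into an ordinary $W_p(m,\cdot)$ variable, and check that this map intertwines the Choleski-based transformation $H \mapsto \mathcal U(H)'B\,\mathcal U(H)$ appropriately.

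The key steps, in order, would be: (i) Write down the joint density of $(H,B)$ with $H\sim\GW_p(m+n,A,S)$ and $B\sim\B_p(m/2,n/2)$ independent, using the $\GW$ density above and the multivariate singular beta density from Lemma~\ref{lemma}. (ii) Perform the change of variables $(H,B)\mapsto(G,B)$ where $G=\mathcal U(H)'B\,\mathcal U(H)$; equivalently, since $\mathcal U(H)$ is upper triangular and $B$ ranges over the appropriate (singular) set, solve for $H$ in terms of $(G,B)$ and compute the Jacobian. (iii) Integrate out $B$ over its support and identify the marginal density of $G$ as that of $\GW_p(m,A,S)$. The case $A=I_p$, $S=I_p$ is exactly Uhlig's original convolution, so the algebra of steps (ii)–(iii) — the Jacobian of the Choleski-quadratic map and the singular beta integral — is already done there; my job is to track how the extra factor $\textrm{etr}(-A\,H^{1/2}SH^{1/2}/2)$ transforms. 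The point that makes this work is that $\mathcal U(H)'\mathcal U(H)=H$, so $H$ is recovered from $\mathcal U(H)$, and one expects $A\,G^{1/2}SG^{1/2}$ and $A\,H^{1/2}SH^{1/2}$ to match up in trace after the substitution, leaving the exponential factor with the correct parameters $(A,S)$ and only the power of $|H|$ versus $|G|$ and the normalizing constants to reconcile via the beta integral.

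The main obstacle I anticipate is precisely that last matching: the map in the theorem is built from the \emph{Choleski} factor $\mathcal U(H)$, whereas the $\GW$ density and Theorem~\ref{lem1} are written in terms of the \emph{symmetric} square root $H^{1/2}$, and these do not agree as matrices. So one cannot naively write $\textrm{etr}(-A\,H^{1/2}SH^{1/2}/2)$ as something manifestly in $G$. The resolution I would pursue is to rephrase the $\GW$ density in a form that is invariant under the choice of square root — note $\textrm{etr}(-A\,H^{1/2}SH^{1/2}/2)$ depends only on the conjugacy class, and one can try to absorb $A$ and $S$ by an initial deterministic reduction: since $\mathcal U(H)'B\,\mathcal U(H)$ and the Choleski construction are coordinate operations, I would attempt to conjugate the whole identity by fixed matrices so as to reduce to the case $A=S=I_p$, invoke Uhlig (1994), and then conjugate back, checking at each stage that the $\GW$ parameters transform as claimed and that Theorem~\ref{th:commute} (the $A \leftrightarrow S$ symmetry) is consistent with the outcome. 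If a clean conjugation reduction is not available because $\mathcal U(\cdot)$ does not commute with conjugation, the fallback is the direct Jacobian computation of step (ii), treating the extra exponential factor as a bounded continuous weight and verifying the identity of densities pointwise.
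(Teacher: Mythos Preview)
Your diagnosis of the central obstacle is exactly right: the $\GW$ density is written through the \emph{symmetric} square root $H^{1/2}$, while the transformation $G=\U(H)'B\,\U(H)$ uses the Choleski factor, and the two do not interact cleanly. Unfortunately, neither of your proposed resolutions closes this gap. The conjugation idea fails for the reason you yourself flag: the Choleski factor does not behave equivariantly under congruence by a fixed matrix, so one cannot reduce to $A=S=I_p$ by conjugating $H$ and $G$ simultaneously. Concretely, if $\widetilde H:=H^{1/2}A^{-1}H^{1/2}\sim\W_p(m+n,S)$ (which is the defining property of $\GW$), then Uhlig gives $\U(\widetilde H)'B\,\U(\widetilde H)\sim\W_p(m,S)$, but there is no reason for $\U(\widetilde H)'B\,\U(\widetilde H)$ to equal $G^{1/2}A^{-1}G^{1/2}$, which is what you would need. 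Your fallback---treat the exponential as a weight and integrate $B$ out directly---does not work either: after inverting the map one gets $H=\U(G)'B^{-1}\U(G)$, and the term $\textrm{etr}(-AH^{1/2}SH^{1/2}/2)$ becomes an intractable function of both $G$ and $B$ that does not integrate against the singular beta density in closed form. Note also that Uhlig (1994) does \emph{not} prove his result by integrating out $B$; his argument goes the other way, constructing $B$ from auxiliary Wishart/normal pieces.

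The paper's proof follows that auxiliary-variable route. It proves a slightly stronger lemma: take $A_1\sim\W_p(m,I_p)$, $A_2=\sum_{j=1}^n Y_jY_j'$ with $Y_j\sim\N_p(0,I_p)$, and $H\sim\GW_p(m+n,A,S)$, all independent; set $C=A_1+A_2$, $B=\{\U(C)'\}^{-1}A_1\{\U(C)\}^{-1}$, $G=\U(H)'B\,\U(H)$, and $Z_j=\U(H^{1/2}AH^{1/2})'\{\U(C)'\}^{-1}Y_j$. Writing out the joint density of $(A_1,H,Y_1,\dots,Y_n)$ and changing variables to $(C,G,Z_1,\dots,Z_n)$, the exponential terms combine via the additive identity $H^{1/2}AH^{1/2}=G^{1/2}AG^{1/2}+\sum_j Z_jZ_j'$, and the joint density factorizes, yielding $G\sim\GW_p(m,A,S)$ independently of $C$ and the $Z_j$. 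The theorem then follows because the $B$ so constructed is exactly $\B_p(m/2,n/2)$. The point you were missing is this augmentation: rather than integrating out $B$, one \emph{disintegrates} it into the pieces $(A_1,A_2)$ and lets the extra normal variables absorb the difference between the $H$- and $G$-exponentials.
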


\section{Estimation}\label{s2s2}

\subsection{The main algorithm}

In this section we consider estimation for model (\ref{model2}),
where $\Sigma_t$ follows the evolution (\ref{evol}). The prior
distributions of $\theta_0|\Sigma_0$ and $\Sigma_0$ are chosen to be
Gaussian and a generalized inverted Wishart respectively, i.e.
\begin{gather}
\theta_0|\Sigma_0\sim
\N_p(m_0,\Sigma_0^{1/2}P_0\Sigma_0^{1/2})\quad\textrm{and}\quad
\Sigma_0\sim \GIW_p(n_0,Q^{-1},S_0),\label{eq8}
\end{gather}
for some known parameters $m_0$, $P_0=p_0I_p$, $n_0>2p+2$ and $S_0$.
$Q$ is the limit of $Q_{t-1}(1)=P_{t-1}+\Omega+I_p$, where $P_t$ is
a known covariance matrix. The next result shows that the limit of
$P_t$ (and hence the limit of $Q_{t-1}(1)$) exist and it provides
the value of this limit as a function of $\phi$ and $\Omega$.

\begin{thm}\label{lem:limit}
If $P_{t}=R_{t}(R_{t}+I_p)^{-1}$, with $R_{t}=\phi^2P_{t-1}+\Omega$,
where $\Omega$ is a positive definite matrix and considering the
prior $P_0=p_0I_p$, for a known constant $p_0>0$, it is
$$
P=\lim_{t\rightarrow\infty}P_t=\frac{1}{2\phi^2}\left[
\left\{(\Omega+(1-\phi^2)I_p)^2+4\Omega\right\}^{1/2}-\Omega-(1-\phi^2)I_p\right],
$$
for $\phi\neq 0$ and $P=\Omega (\Omega+I_p)^{-1}$, for $\phi=0$.
\end{thm}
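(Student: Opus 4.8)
The plan is to treat the matrix recursion $P_t = R_t(R_t+I_p)^{-1}$ with $R_t = \phi^2 P_{t-1} + \Omega$ as a fixed-point iteration and to establish both existence of the limit and its explicit value. First I would observe that since $P_0 = p_0 I_p$ and $\Omega$ is positive definite, every $P_t$ is symmetric positive definite and commutes with $\Omega$; indeed an easy induction shows that $P_t$ is a polynomial in $\Omega$ (the map $R \mapsto R(R+I_p)^{-1}$ and the affine map $R \mapsto \phi^2 R + \Omega$ both preserve the commutant of $\Omega$, and $P_0 = p_0 I_p$ commutes with everything). Consequently $P_t$, $\Omega$ and $I_p$ are simultaneously diagonalizable, and the whole matrix recursion decouples, on each eigenspace of $\Omega$ with eigenvalue $\lambda > 0$, into the scalar recursion $x_t = f(x_{t-1})$ where $f(x) = (\phi^2 x + \lambda)/(\phi^2 x + \lambda + 1)$, started at $x_0 = p_0 > 0$.

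Next I would analyse this scalar recursion. For $\phi \neq 0$, $f$ maps $(0,\infty)$ into $(0,1)$, is increasing and is a contraction on $(0,\infty)$ (since $f'(x) = \phi^2/(\phi^2 x + \lambda + 1)^2 \le \phi^2/(\lambda+1)^2 < 1$), so by the Banach fixed-point theorem $x_t$ converges to the unique fixed point $x_\star \in (0,1)$. Solving $x(\phi^2 x + \lambda + 1) = \phi^2 x + \lambda$, i.e. $\phi^2 x^2 + (\lambda + 1 - \phi^2) x - \lambda = 0$, and taking the positive root gives
$$
x_\star = \frac{1}{2\phi^2}\left[\left\{(\lambda + 1 - \phi^2)^2 + 4\phi^2\lambda\right\}^{1/2} - (\lambda + 1 - \phi^2)\right].
$$
Recognizing that $(\lambda + 1 - \phi^2)^2 + 4\phi^2\lambda = (\lambda + (1-\phi^2))^2 + 4\lambda$ when the cross terms are regrouped — more precisely $(\lambda+1-\phi^2)^2 + 4\phi^2\lambda = (\lambda - 1 + \phi^2)^2 + 4\lambda$, which one checks by expanding both sides — lets me rewrite $x_\star$ in the form that, read back at the level of matrices (replacing $\lambda$ by $\Omega$ and scalars by their multiples of $I_p$), is exactly the stated expression $P = \tfrac{1}{2\phi^2}[\{(\Omega+(1-\phi^2)I_p)^2 + 4\Omega\}^{1/2} - \Omega - (1-\phi^2)I_p]$. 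For $\phi = 0$ the recursion collapses to the constant $x_t = \lambda/(\lambda+1)$ for $t \ge 1$, giving $P = \Omega(\Omega+I_p)^{-1}$ directly.

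Finally I would assemble the eigenspace conclusions: since convergence holds on each eigenspace with a common diagonalizing orthogonal basis, $P_t \to P$ where $P$ has the same eigenvectors as $\Omega$ and eigenvalue $x_\star(\lambda)$ on the $\lambda$-eigenspace; because $x_\star$ is precisely the scalar function obtained by spectral calculus from the matrix formula in the statement, the matrix identity follows. The only genuinely delicate point is the algebraic rearrangement showing the discriminant $(\lambda+1-\phi^2)^2 + 4\phi^2\lambda$ equals $(\lambda + (1-\phi^2))^2 + 4\lambda$ — so that the quadratic formula's output matches the advertised closed form with $4\Omega$ rather than $4\phi^2\Omega$ under the square root; everything else (positive definiteness, commutativity, the contraction estimate) is routine. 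I would also note in passing that $P \le I_p$, which is what is needed downstream for $Q_{t-1}(1) = P_{t-1} + \Omega + I_p$ to converge to $Q$.
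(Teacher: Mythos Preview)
Your reduction to a scalar recursion via simultaneous diagonalization is sound and shares the key observation with the paper: that $P_0=p_0I_p$ forces every $P_t$ to commute with $\Omega$. The paper keeps everything at the matrix level---it shows $0<P_t<I_p$ and that $\{P_t^{-1}\}$ is monotone in the Loewner order, then completes the square in the limiting matrix quadratic---whereas you diagonalize first and analyze the scalar map $f(x)=(\phi^2x+\lambda)/(\phi^2x+\lambda+1)$. Both routes are legitimate, and yours is arguably cleaner once commutativity is in hand.

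However, your convergence step has a gap. The bound $f'(x)\le\phi^2/(\lambda+1)^2$ is correct, but the claim that this is strictly less than $1$ fails whenever $|\phi|\ge 1+\lambda$; nothing in the hypotheses restricts $\phi$. So $f$ need not be a contraction on $(0,\infty)$, and the Banach fixed-point theorem does not apply as stated. The repair is precisely the paper's argument read at the scalar level: $f$ is increasing and maps $(0,\infty)$ into $(0,1)$, so $\{x_t\}$ is monotone and bounded, hence convergent; uniqueness of the positive fixed point follows because the quadratic $\phi^2x^2+(\lambda+1-\phi^2)x-\lambda=0$ has roots of opposite sign.

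Your closing algebra is also not quite right, and your instinct that this was the ``delicate point'' was well founded. You correctly obtain the discriminant $(\lambda+1-\phi^2)^2+4\phi^2\lambda$ and correctly rewrite it as $(\lambda-1+\phi^2)^2+4\lambda$, but this is \emph{not} equal to $(\lambda+(1-\phi^2))^2+4\lambda$ unless $\phi^2=1$; you have flipped the sign inside the square when matching to the stated formula. In fact the displayed closed form with $(\Omega+(1-\phi^2)I_p)^2+4\Omega$ under the root only agrees with the fixed-point equation when $\phi^2=1$: for general $\phi$ one needs either $4\phi^2\Omega$ under the root with that squared term, or $4\Omega$ with $(\Omega-(1-\phi^2)I_p)^2$. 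The paper's own completing-the-square step drops a factor of $\phi^{-2}$ in front of $\Omega$, which is the source of the same discrepancy. Your derivation up to the quadratic formula is correct; it is the stated answer that needs adjusting when $\phi^2\ne1$.
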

This result generalizes relevant limit results for the univariate
random walk plus noise model (Anderson and Moore, 1979, page 77;
Harvey, 1989, page 119).

Let $Y\sim \T_p(n,m,P)$ denote that the $p$-dimensional random
vector $Y$ follows a multivariate Student $t$ distribution with $n$
degrees of freedom, mean $m$ and scale or spread matrix $P$ (Gupta
and Nagar, 1999, Chapter 4). The next result gives an approximate
Bayesian algorithm for the posterior distributions of $\theta_{t}$
and $\Sigma_{t}$ as well as for the one-step forecast distribution
of $y_{t}$.

\begin{thm}\label{th3}
In the multivariate state space model (\ref{model2}) with evolution
(\ref{evol}), let the initial priors for $\theta_0|\Sigma_0$ and
$\Sigma_0$ be specified as in equation (\ref{eq8}). The one-step
forecast and posterior distributions are approximately given, for
each $1\leq t\leq N$, as follows:
\begin{enumerate}
\item [(a)] One-step forecast at time $t$: $\Sigma_{t}|y^{t-1}\sim\GIW_p(\delta
(1-\delta)^{-1}+2p,Q^{-1},k^{-1} S_{t-1})$ and $y_{t}|y^{t-1}\sim
\T_p(\delta(1-\delta)^{-1},m_{t-1},k^{-1}S_{t-1})$, where
$k=(\delta(1-p)+p)(\delta(2-p)+p-1)^{-1}$ and $\delta$, $S_{t-1}$,
$m_{t-1}$ are known at time $t-1$. \item [(b)] Posteriors at $t$:
$\theta_{t}|\Sigma_{t},y^{t}\sim
\N_p(m_{t},\Sigma_{t}^{1/2}P_{t}\Sigma_{t}^{1/2})$\\ and
$\Sigma_{t}|y^{t}\sim \GIW((1-\delta)^{-1}+2p,Q^{-1},S_{t})$, with
\begin{gather*}
m_{t}=m_{t-1}+A_{t}e_{t},\quad
P_{t}=(\phi^2P_{t-1}+\Omega)(\phi^2P_{t-1}+\Omega+I_p)^{-1},\\
e_{t}=y_{t}-m_{t-1}, \quad S_{t}=k^{-1}S_{t-1}+e_{t}e_{t}',
\end{gather*}
where $A_{t}=\Sigma_{t}^{1/2}P_{t}\Sigma_{t}^{-1/2}$ is approximated
by $A_{t}^{*}=(S_{t}^*)^{1/2}P_{t}(S_{t}^*)^{-1/2}$, with
$S_{t}^*=\widehat{\Sigma}(Q^{-1},S_{t})$ the estimator of
$\Sigma_{t}|y^{t}$ (see equation (\ref{sp:estim})) and
$Q_{t-1}(1)=P_{t-1}+\Omega+I_p$ being approximated by its limit
$Q=P+\Omega+I_p$, where $P$ is given by Theorem \ref{lem:limit}.
\end{enumerate}
\end{thm}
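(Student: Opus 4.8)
The plan is to prove Theorem \ref{th3} by induction on $t$, carrying forward the pair of approximate distributional statements for $\theta_t|\Sigma_t,y^t$ and $\Sigma_t|y^t$, using at each step the state evolution, the volatility law (\ref{evol}), and the generalized Wishart/beta convolution of Theorem \ref{th:uhlig}. At $t=0$ the claim is exactly the prior (\ref{eq8}). So assume $\Sigma_{t-1}|y^{t-1}\sim\GIW_p((1-\delta)^{-1}+2p,Q^{-1},S_{t-1})$ and $\theta_{t-1}|\Sigma_{t-1},y^{t-1}\sim\N_p(m_{t-1},\Sigma_{t-1}^{1/2}P_{t-1}\Sigma_{t-1}^{1/2})$, and propagate.

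First I would handle the volatility evolution. Writing $\Sigma_{t-1}^{-1}\sim\GW_p((1-\delta)^{-1}+p-1,Q,S_{t-1}^{-1})$ (the inverse of a $\GIW$, per the density computed just after (\ref{sp:estim})), I apply Theorem \ref{th:uhlig} with $m=\delta(1-\delta)^{-1}+p-1$ and $n=1$, so that $m+n=(1-\delta)^{-1}+p-1$ matches the degrees of freedom of $\Sigma_{t-1}^{-1}$, and with $B_t\sim\B_p(m/2,1/2)$ independent of $\Sigma_{t-1}$. The convolution gives $\U(\Sigma_{t-1}^{-1})'B_t\U(\Sigma_{t-1}^{-1})\sim\GW_p(m,Q,S_{t-1}^{-1})$; multiplying by the scalar $k$ rescales the second covariance parameter, yielding $\Sigma_t^{-1}|y^{t-1}\sim\GW_p(\delta(1-\delta)^{-1}+p-1,Q,k^{-1}S_{t-1}^{-1})$, i.e. $\Sigma_t|y^{t-1}\sim\GIW_p(\delta(1-\delta)^{-1}+2p,Q^{-1},k^{-1}S_{t-1})$, which is the first half of part (a). The degrees-of-freedom bookkeeping here is the crux of why the constants $k$ and $m$ were chosen as they were, so I would check the translation between the $\GW_p(\nu,\cdot,\cdot)$ and $\GIW_p(n,\cdot,\cdot)$ parametrizations ($n=\nu+p+1$) carefully.

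Next comes the observation update. Conditional on $\Sigma_t$, the signal $\theta_t=\phi\theta_{t-1}+\Omega_t^{1/2}\omega_t$ with $\Omega_t=\Sigma_t^{1/2}\Omega\Sigma_t^{1/2}$; combining the prior on $\theta_{t-1}$ with the Gaussian evolution gives $\theta_t|\Sigma_t,y^{t-1}\sim\N_p(\phi m_{t-1},\Sigma_t^{1/2}R_t\Sigma_t^{1/2})$ with $R_t=\phi^2P_{t-1}+\Omega$ — this uses that $A_t=\Sigma_t^{1/2}P_t\Sigma_t^{-1/2}$ so that $\phi\theta_{t-1}$ has conditional covariance $\phi^2\Sigma_t^{1/2}P_{t-1}\Sigma_t^{1/2}$ after the approximation $\Sigma_t\approx\Sigma_{t-1}$ in the similarity transform (this is one source of the ``approximately''; I would flag it as such). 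Then with $y_t=\theta_t+\Sigma_t^{1/2}\epsilon_t$, the forecast is $y_t|\Sigma_t,y^{t-1}\sim\N_p(\phi m_{t-1},\Sigma_t^{1/2}(R_t+I_p)\Sigma_t^{1/2})$. I then need to integrate out $\Sigma_t$; because the scale in this Gaussian is $\Sigma_t^{1/2}(\cdot)\Sigma_t^{1/2}$ rather than a clean multiple of $\Sigma_t$, this integration is not exactly a $\GIW$-normal conjugacy, and here I would invoke the approximation $Q_{t-1}(1)=P_{t-1}+\Omega+I_p\approx Q=P+\Omega+I_p$ (Theorem \ref{lem:limit}) together with $\phi m_{t-1}\approx m_{t-1}$ at the forecast step, to identify $y_t|y^{t-1}$ with a multivariate Student $t$; matching the density of a $\GIW_p(\delta(1-\delta)^{-1}+2p,Q^{-1},k^{-1}S_{t-1})$ mixed over a conditional normal with scale $\Sigma_t^{1/2}Q\Sigma_t^{1/2}$ against the $\T_p$ density gives $y_t|y^{t-1}\sim\T_p(\delta(1-\delta)^{-1},m_{t-1},k^{-1}S_{t-1})$, completing (a). Finally, Bayes' theorem updates: the $\theta_t$ posterior mean becomes $m_t=m_{t-1}+A_te_t$ with $P_t=R_t(R_t+I_p)^{-1}$ by the usual Kalman/normal-normal algebra (with the $\Sigma_t$-similarity kept throughout), and the $\Sigma_t$ posterior gains the outer product $e_te_t'$ in its scale parameter, $S_t=k^{-1}S_{t-1}+e_te_t'$, while the degrees of freedom increase by one, from $\delta(1-\delta)^{-1}+2p$ to $(1-\delta)^{-1}+2p$. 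I expect the main obstacle to be precisely the non-conjugate integration over $\Sigma_t$ at the observation step: because $\Omega_t$ and the likelihood scale involve $\Sigma_t^{1/2}(\cdot)\Sigma_t^{1/2}$ sandwiches rather than scalar multiples of $\Sigma_t$, exact conjugacy fails, and the argument must be organized around the stated approximations (replacing $Q_{t-1}(1)$ by its limit $Q$, treating the similarity transforms $A_t$ as if $\Sigma_t$ were locally constant, and approximating $\phi m_{t-1}$ by $m_{t-1}$) so that the posterior and forecast families close up — making explicit at each use which approximation is invoked is where the care is needed.
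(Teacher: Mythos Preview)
Your proposal is correct and follows essentially the same inductive route as the paper: propagate the $\GIW$ posterior through the evolution via Theorem~\ref{th:uhlig}, obtain the conditional Gaussian one-step forecast from the Kalman filter with scale $\Sigma_t^{1/2}Q\Sigma_t^{1/2}$ after the $Q_{t-1}(1)\to Q$ approximation, integrate out $\Sigma_t$ to get the Student $t$, and update via Bayes to recover the $\GIW$ posterior with $S_t=k^{-1}S_{t-1}+e_te_t'$ and degrees of freedom increased by one. You are actually more explicit than the paper about where approximations enter (the $\Sigma_{t-1}\to\Sigma_t$ transfer in the $\theta$-prior covariance and the $\phi m_{t-1}\approx m_{t-1}$ issue); the only slip is that after scaling by $k$ the prior precision is $\Sigma_t^{-1}|y^{t-1}\sim\GW_p(\delta(1-\delta)^{-1}+p-1,Q,kS_{t-1}^{-1})$ rather than $k^{-1}S_{t-1}^{-1}$, though your ensuing $\GIW$ statement is correct.
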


From Theorem \ref{th3} we have that the one-step forecast vector
mean and covariance matrix of $y_{t}$ are
$$
y_{t-1}(1)=\E(y_{t}|y^{t-1})=m_{t-1}\quad\textrm{and}\quad
\var(y_{t}|y^{t-1})=
\frac{k^{-1}S_{t-1}}{\delta(1-\delta)^{-1}-2}=\frac{(1-\delta)S_{t-1}}{(3\delta-2)k},
$$
for $\delta>2/3$.

We note that $k>1$, since $\delta(1-p)+p>\delta(2-p)+p-1$, for any
$0<\delta<1$ and so, if we expand $S_t$ as
$$
S_t=k^{-t}S_0+\sum_{i=1}^tk^{i-t}e_ie_i',
$$
for large $t$, we can approximate $S_t$ by
$\sum_{i=1}^tk^{i-t}e_ie_i'$. The observation that $k^{-1}<1$ is
important, because otherwise $S_t$ could tend to infinity.

From Theorem \ref{th3}, if $\Omega=I_p$, then
$\Sigma_{t}|y^{t-1}\sim\IW_p(\delta(1-\delta)^{-1}+2p,k^{-1}Q^{-1}S_{t-1})$
and
$\Sigma_{t-1}|y^{t-1}\sim\IW_p((1-\delta)^{-1}+2p,Q^{-1}S_{t-1})$,
where now $Q$ is a variance. Thus
$\Sigma_{t}^{-1}|y^{t-1}\sim\W_p(\delta(1-\delta)^{-1}+p-1,kQS_{t-1}^{-1})$
and
$\Sigma_{t-1}^{-1}|y^{t-1}\sim\W_p((1-\delta)^{-1}+p-1,QS_{t-1}^{-1})$
so that
\begin{equation}\label{rwalk1}
\E(\Sigma_{t}^{-1}|y^{t-1})=\left(\frac{\delta}{1-\delta}+p-1\right)
k Q S_{t-1}^{-1} = \left(\frac{1}{1-\delta}+p-1\right) Q
S_{t-1}^{-1}=\E(\Sigma_{t-1}^{-1}|y^{t-1}),
\end{equation}
with $k=(\delta(1-p)+p)(\delta(2-p)+p-1)^{-1}$. From the Wishart
densities it follows that, given $y^{t-1}$,
\begin{equation}\label{rwalk2}
\var\{\textrm{vecp}(\Sigma_{t}^{-1})|y^{t-1}\}\geq
\var\{\textrm{vecp}(\Sigma_t^{-1})|y^{t-1}\},
\end{equation}
in the sense that
$\var\{\textrm{vecp}(\Sigma_{t}^{-1})|y^{t-1}\}-\var\{\textrm{vecp}(\Sigma_t^{-1})|y^{t-1}\}$
is a non-negative definite matrix, where
$\textrm{vecp}(\Sigma_t^{-1})$ denotes the column stacking operator
of $\Sigma_t^{-1}$. Equations (\ref{rwalk1}) and (\ref{rwalk2}) show
that when $\Omega=I_p$, $\Sigma_{t}^{-1}$ follows a random walk type
evolution. When $\Omega$ is a covariance matrix we can see that
$$
\E(\Sigma_{t}^{-1/2}Q\Sigma_{t}^{-1/2}|y^{t-1})=\E(\Sigma_{t-1}^{-1/2}Q\Sigma_{t-1}^{-1/2}
|y^{t-1})
$$
and
$\var\{\textrm{vecp}(\Sigma_{t}^{-1/2}Q\Sigma_{t}^{-1/2}|y^{t-1})\}\geq
\var\{\textrm{vecp}(\Sigma_{t-1}^{-1/2}Q\Sigma_{t-1}^{-1/2}
|y^{t-1}\}$. The proof is by noting that
$\Sigma_{t}^{-1/2}Q\Sigma_{t}^{-1/2}|y^{t-1}\sim\W_p(\delta(1-\delta)^{-1}+p-1,kS_{t-1}^{-1})$
and
$\Sigma_{t-1}^{-1/2}Q\Sigma_{t-1}^{-1/2}|y^{t-1}\sim\W_p((1-\delta)^{-1}+p-1,S_{t-1}^{-1})$
and following a similar argument as in equations (\ref{rwalk1}) and
(\ref{rwalk2}). Hence when $\Omega$ is a covariance matrix
$\Sigma_{t}^{-1/2}Q\Sigma_{t}^{-1/2}$ follows a random walk type
evolution and this motivates the adoption of evolution (\ref{evol}).
Equation (\ref{rwalk1}) shows that under the definition of $k$, the
expectation of $\Sigma_{t}^{-1}$ equals the expectation of
$\Sigma_{t-1}^{-1}$, while the respective variances are increased
from time $t-1$ to $t$.

\subsection{Performance measures}

In this section we discuss several performance measures for model
(\ref{model2}). We start giving the log-likelihood function of
$\Sigma_t$.

\begin{thm}\label{th2}
In the state space model (\ref{model2}) with evolution (\ref{evol})
denote with $\ell(\Sigma_1,\ldots,\Sigma_N;y^N)$ the log-likelihood
function of $\Sigma_1,\ldots,\Sigma_N$, based on data
$y^N=\{y_1,\ldots,y_N\}$. Then it is
\begin{eqnarray*}
\ell(\Sigma_1,\ldots,\Sigma_N;y^N)  &=& c -
\frac{1}{2}\sum_{t=1}^Ne_t'\Sigma_t^{-1/2}Q\Sigma_t^{-1/2}e_t -
\frac{2\delta-1}{1-\delta}\sum_{t=1}^N\log |\U(\Sigma_{t-1}^{-1})| \\
&& - \frac{p}{2} \sum_{t=1}^N\log |L_t|
-\frac{3\delta-2}{2(1-\delta)}\sum_{t=1}^N\log |\Sigma_t|,
\end{eqnarray*}
and
\begin{eqnarray*}
c&=& -Np\log\pi -\frac{N}{2}\log |Q| - \frac{Np}{2}\log k \\ &&
+N\log\left\{ \Gamma_p\left(\frac{\delta(1-p)+p}{2(1-\delta)}\right)
\bigg/ \Gamma_p\left(\frac{\delta(2-p)+p-1}{2(1-\delta)}\right)
\right\},
\end{eqnarray*}
where $\delta>2/3$, $k=\{\delta(1-p)+p\}\{\delta(2-p)+p-1\}^{-1}$
and $L_t$ is the diagonal matrix with diagonal elements the positive
eigenvalues of
$I_p-k^{-1}\{\U(\Sigma_{t-1}^{-1})'\}^{-1}\Sigma_t^{-1}\{\U(\Sigma_{t-1}^{-1})\}^{-1}$,
with $\Sigma_t^{-1}=\U(\Sigma_t^{-1})'\U(\Sigma_t^{-1})$.
\end{thm}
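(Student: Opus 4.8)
The plan is to build the joint density of $y^N = \{y_1,\dots,y_N\}$ conditional on $\Sigma_1,\dots,\Sigma_N$ by exploiting the one-step forecast decomposition $p(y^N \mid \Sigma_1,\dots,\Sigma_N) = \prod_{t=1}^N p(y_t \mid y^{t-1}, \Sigma_1,\dots,\Sigma_N)$ and then taking logarithms. The work splits into two distinct contributions that together must reproduce the five displayed sums and the constant $c$: first, the contribution of the observation equation conditionally on the states; second, the Jacobian/normalizing contribution coming from the stochastic evolution (\ref{evol}) of $\Sigma_t^{-1}$, which is where the terms involving $\U(\Sigma_{t-1}^{-1})$, $|L_t|$ and the powers of $|\Sigma_t|$ originate.

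First I would handle the observation part. Marginalizing the state $\theta_t$ out of $y_t = \theta_t + \Sigma_t^{1/2}\epsilon_t$ with $\theta_t$ having (conditionally on $\Sigma_t$) the Gaussian form inherited from the filtering recursions of Theorem \ref{th3}, one gets that $y_t \mid y^{t-1}, \Sigma_t$ is Gaussian with mean $m_{t-1}$ and covariance $\Sigma_t^{1/2}(P_{t-1}+\Omega+I_p)\Sigma_t^{1/2} = \Sigma_t^{1/2}Q_{t-1}(1)\Sigma_t^{1/2}$; under the limiting approximation $Q_{t-1}(1)\approx Q$ this covariance is $\Sigma_t^{1/2}Q\Sigma_t^{1/2}$. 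Taking $-\tfrac12\log$ of the Gaussian density produces the quadratic form $-\tfrac12 e_t'\Sigma_t^{-1/2}Q\Sigma_t^{-1/2}e_t$ with $e_t = y_t - m_{t-1}$, a $-\tfrac12\log|Q|$ per time point (feeding into $c$), a $-\tfrac{1}{2}\log|\Sigma_t|$ per time point from $|\Sigma_t^{1/2}Q\Sigma_t^{1/2}| = |Q|\,|\Sigma_t|$, and a $-\tfrac{p}{2}\log(2\pi)$ per time point, i.e. $-Np\log\pi$ minus an $\tfrac{Np}{2}\log 2$ that must later cancel.

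The second and harder part is to account for the fact that the $\Sigma_t$ are not free parameters but are linked through (\ref{evol}): writing $\Sigma_t^{-1} = k\,\U(\Sigma_{t-1}^{-1})'B_t\,\U(\Sigma_{t-1}^{-1})$, one must change variables from $(\Sigma_1,\dots,\Sigma_N)$ to $(\Sigma_0, B_1,\dots,B_N)$ — or rather insert the density of each $B_t\sim\B_p(m/2,1/2)$ with $m=\delta(1-\delta)^{-1}+p-1$ and the Jacobian of the map $B_t \mapsto \Sigma_t^{-1}$ (and then $\Sigma_t^{-1}\mapsto\Sigma_t$). This is precisely where Lemma \ref{lemma} in the appendix supplies the density of the multivariate singular beta: it is proportional to $|B_t|^{(m-p-1)/2}\,|L_t|^{-1/2}$ in the singular case, with $L_t$ the diagonal matrix of the positive eigenvalues of $I_p - B_t$, and $B_t = k^{-1}\{\U(\Sigma_{t-1}^{-1})'\}^{-1}\Sigma_t^{-1}\{\U(\Sigma_{t-1}^{-1})\}^{-1}$, matching the statement. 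Collecting the Jacobian factors from $B_t\mapsto\Sigma_t^{-1}$ (which brings in powers of $|\U(\Sigma_{t-1}^{-1})|$ and of $k$) and from $\Sigma_t^{-1}\mapsto\Sigma_t$ (which brings in $|\Sigma_t|^{-(p+1)}$ or similar), combining with $|B_t|^{(m-p-1)/2} \propto |\Sigma_t^{-1}|^{(m-p-1)/2}|\U(\Sigma_{t-1}^{-1})|^{-(m-p-1)}k^{-\cdots}$, and bookkeeping the $\Gamma_p$ normalizing constant of $\B_p(m/2,1/2)$ — which, with $m/2 = \tfrac{\delta(2-p)+p-1}{2(1-\delta)}$ and $(m+1)/2 = \tfrac{\delta(1-p)+p}{2(1-\delta)}$, is exactly the ratio of multivariate gammas in $c$ — should reproduce the coefficients $\tfrac{2\delta-1}{1-\delta}$ on $\sum\log|\U(\Sigma_{t-1}^{-1})|$, $\tfrac{p}{2}$ on $\sum\log|L_t|$, and the leftover $-\tfrac{3\delta-2}{2(1-\delta)}$ on $\sum\log|\Sigma_t|$ after merging with the $-\tfrac12\log|\Sigma_t|$ from the Gaussian part.

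The main obstacle I anticipate is the careful exponent arithmetic in this second step: tracking all the powers of $k$, of $|\U(\Sigma_{t-1}^{-1})|$, and of $|\Sigma_t|$ through the two successive changes of variables and the singular-beta normalization, and checking that the stray $\log 2$ and $\log k$ constants combine to give exactly the stated $c$ (in particular the $-\tfrac{Np}{2}\log k$ term). The identity $|\U(M)|^2 = |M|$ for the Choleski factor, together with $m = \delta(1-\delta)^{-1}+p-1$ so that $m - p - 1 = \tfrac{\delta}{1-\delta} - 2 = \tfrac{3\delta-2}{1-\delta}$ and $\tfrac{m-p-1}{2}$ is what sits against $\log|\Sigma_t^{-1}|$, are the algebraic facts that make the coefficients come out; I would verify each by specializing to $p=1$ (where $k=1/\delta$, $L_t$ is scalar, and the formula should collapse to the known univariate beta-evolution log-likelihood of Uhlig (1994) / Triantafyllopoulos (2007)) as a consistency check before presenting the general computation.
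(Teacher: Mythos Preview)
Your approach is essentially the paper's: factor the likelihood as $\prod_t p(y_t\mid\Sigma_t,y^{t-1})\,p(\Sigma_t\mid\Sigma_{t-1},y^{t-1})$, take the Gaussian one-step forecast $y_t\mid\Sigma_t,y^{t-1}\sim\N_p(m_{t-1},\Sigma_t^{1/2}Q\Sigma_t^{1/2})$ from the Kalman filter, and get the transition density from the singular beta via Lemma~\ref{lemma}. The only streamlining you are missing is that Lemma~\ref{lemma} already gives the density of $X=AB^{-1}A'$ directly, so with $A=k^{-1/2}\{\U(\Sigma_{t-1}^{-1})\}^{-1}$ and $n=1$ you can read off $p(\Sigma_t\mid\Sigma_{t-1})$ in one shot rather than composing two Jacobians (and note the exponent on $|L_t|$ from that lemma is $-(p-n+1)/2=-p/2$, not $-1/2$, which is what produces the $-\tfrac{p}{2}\sum_t\log|L_t|$ term).
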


The choice of $\delta$, $\Omega$ and the priors $m_0$, $p_0$, and
$S_0$ can be done by either maximizing the log-likelihood function
or optimizing performance measures, such as the mean of square
one-step forecast errors (MSE), the mean of square standardized
one-step forecast errors (MSSE), the mean absolute deviation (MAD),
and the mean one-step forecast error (ME). The priors can be set
using historical data, but a general guideline suggests $m_0=0$,
$p_0=1000$ and $S_0=I_p$. In any case these initial settings are not
critical to the performance of the model, especially given plethora
of data. It then remains to specify $\Omega$ and $\delta$. Given
data $y_1,\ldots,y_N$, the definition of the above mentioned
performance measures are
\begin{gather*}
\textrm{MSE}=\frac{1}{N}\sum_{t=1}^N\left(e_{1t}^2,\ldots,e_{pt}^2\right)',
\quad \textrm{MSSE} = \frac{1}{N} \sum_{t=1}^N \left(u_{1t}^2,
\ldots , u_{pt}^2 \right)',
\\ \textrm{MAD} = \frac{1}{N} \sum_{t=1}^N
\left(\textrm{mod} (e_{1t}) , \ldots, \textrm{mod} (e_{pt})\right)',
\quad \textrm{ME} = \frac{1}{N} \sum_{t=1}^N \left(e_{1t}, \ldots,
e_{pt} \right)',
\end{gather*}
where $e_t=(e_{1t},\ldots,e_{pt})'$ is the one-step forecast error
vector, $\textrm{mod}(e_{jt})$ denotes the modulus of $e_{jt}$
$(j=1,\ldots,p)$ and $u_t=(u_{1t},\ldots,u_{pt})'$ is the
standardized one-step forecast error vector, defined by
$$
u_t=\left\{\frac{(1-\delta)S_t}{(3\delta-2)k}\right\}^{-1/2}e_t,
$$
so that $\E(u_t|y^{t-1})=0$ and $\E(u_tu_t'|y^{t-1})=I_p$. Thus, if
the model is a good fit, it should return
$\textrm{MSSE}\approx(1,\ldots,1)'$, $\textrm{ME}\approx (0,\ldots,
0)'$, while $\textrm{MAD}$ and $\textrm{MSE}$ should be as small as
possible.

The MSSE is usually preferred to MSE, because it takes into account
the forecast covariance matrix of the log-returns. However, since
the MSE can be used for comparison of two or more models it is
mentioned here. When we look at the performance of a single model
the MSSE has the ability to judge the goodness of fit in an
effective way. The MAD has a similar performance as the MSE, while
the ME is useful if we wish to check how biased is the estimation
method (Fildes, 1992).

In order to choose the optimal $\Omega$ we propose the following
search procedure. Since $\Omega$ has $p(p+1)/2$ distinct elements,
for relatively large $p$ there are many elements in $\Omega$ to be
optimized. One can reduce the dimensionality of this optimization by
considering a diagonal choice for $\Omega$, writing
$\Omega=\textrm{diag}(w_1,\ldots,w_p)$. Since $0<w_i<\infty$, still
a search procedure for the optimal $w_i$ can be time-consuming. By
defining $Z=\Omega(\Omega+I_p)^{-1}$, we have that $Z$ is also
diagonal and it is $0<Z<I_p$. This means that we can use a grid
search procedure to find the optimal value for $Z$ and then choose
$\Omega=(I_p-Z)^{-1}Z$. For $Z=(z_1,\ldots,z_p)'$ we can use
$z_i=1/10^q,\ldots,(10^q-1)/10^q$, for $i=1,\ldots,p$ and $q$ a
positive integer; for most applications $q=2$ or $q=3$ will suffice.
Then we can readily see that $w_i=z_i/(1-z_i)$, for $i=1,\ldots,p$.
We use this search procedure in the example of Section
\ref{examples}.

\section{FX data analysis}\label{examples}

\begin{figure}[t]
 \epsfig{file=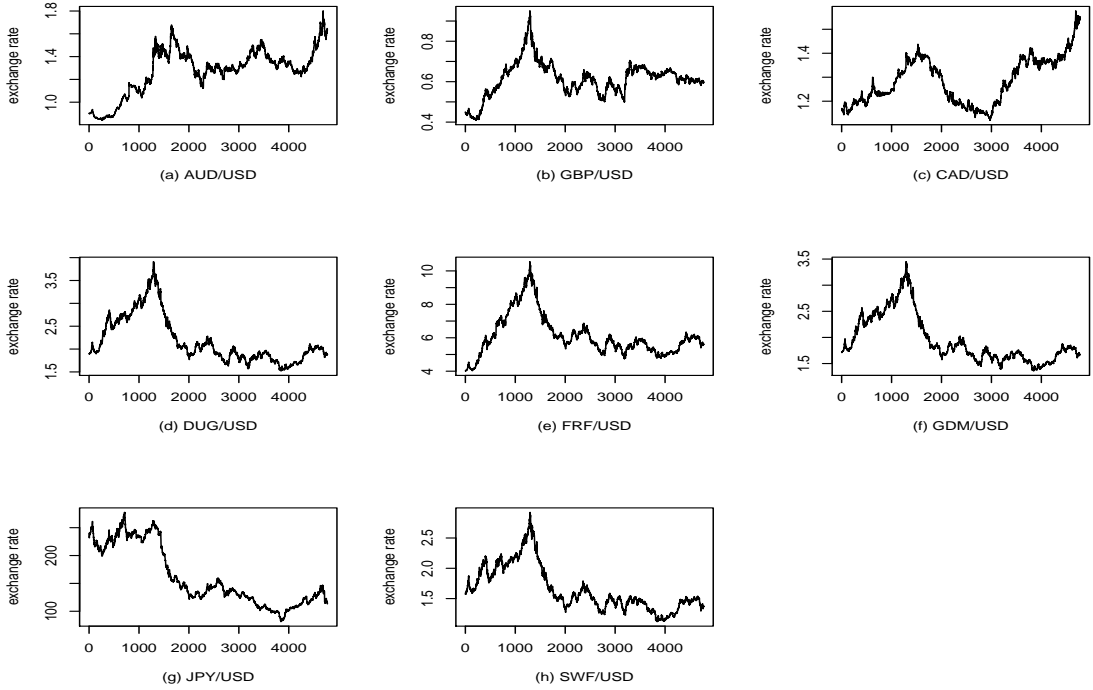, height=10cm, width=15cm}
 \caption{FX data; shown are (a) the AUD/USD exchange rate,
 (b) the GBP/USD rate (c) the CAD/USD rate, (d) the DUG/USD rate,
 (e) the FRF/USD rate, (f) the GDM/USD rate, (g) the JPY/USD rate and
 (h) the SWF/USD rate. }\label{fig3}
\end{figure}

\begin{figure}[t]
 \epsfig{file=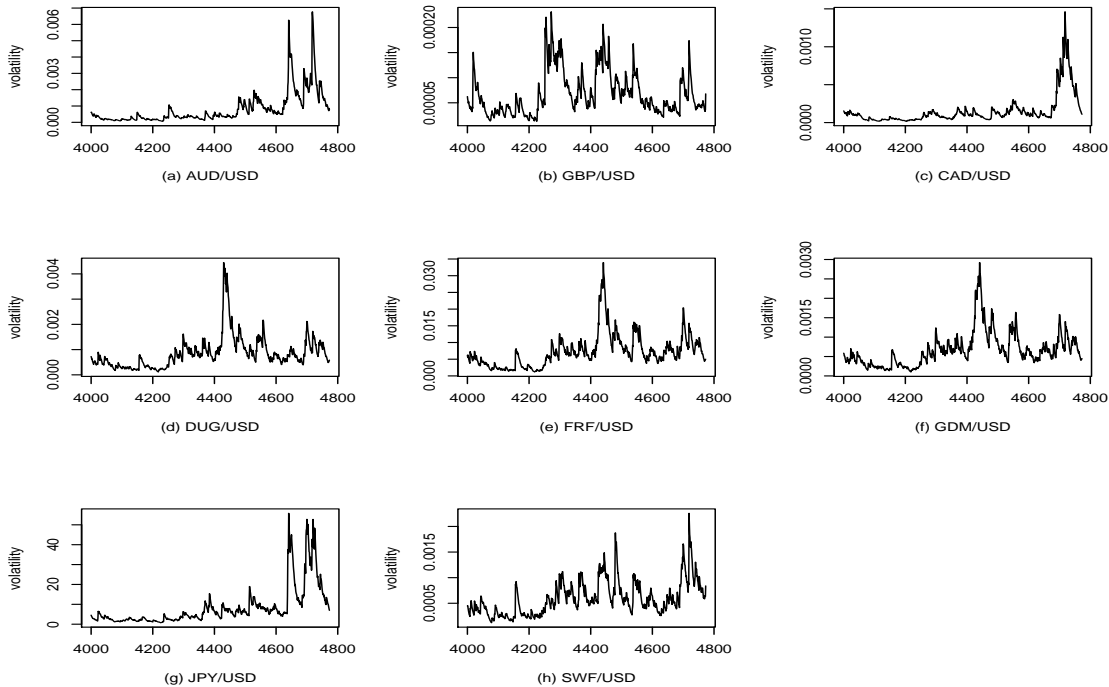, height=10cm, width=15cm}
 \caption{Estimate of the posterior volatility for the FX data. Shown are: (a) the AUD/USD volatility,
 (b) the GBP/USD volatility, (c) the CAD/USD volatility, (d) the DUG/USD volatility,
 (e) the FRF/USD volatility, (f) the GDM/USD volatility, (g) the JPY/USD volatility and
 (h) the SWF/USD volatility.}\label{fig4}
\end{figure}

\begin{figure}[t]
 \epsfig{file=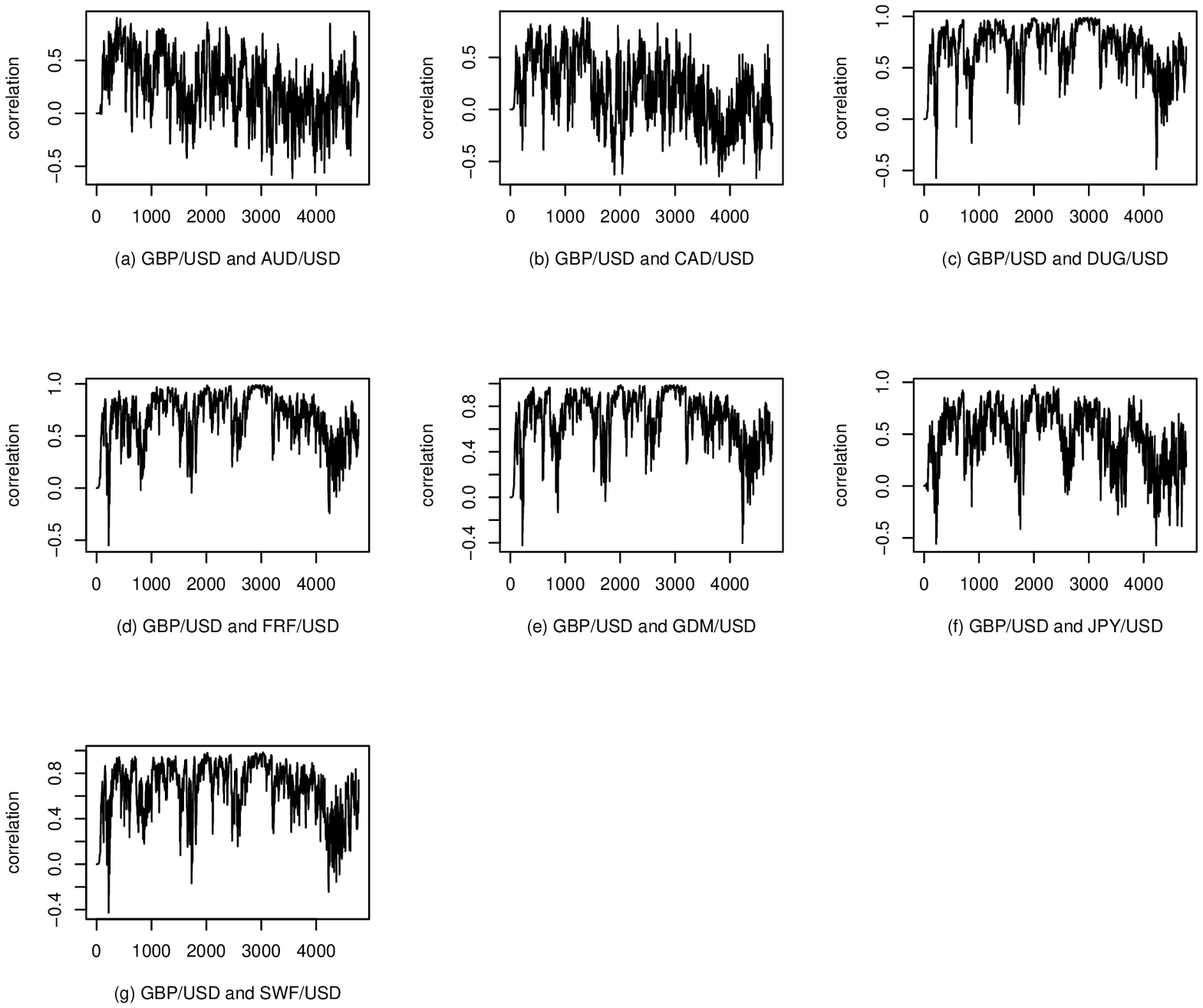, height=10cm, width=15cm}
 \caption{Estimate of the posterior correlation coefficient for the FX data. Shown are: (a) the posterior correlation
 estimate of GBP/USD and AUD/USD, (b) the correlation estimate of GBP/USD and CAD/USD, (c) the correlation of
 GBP/USD and DUG/USD, (d) the correlation of GBP/USD and FRF/USD, (e) the correlation of
 GBP/USD and GDM/USD, (f) the correlation of GBP/USD and JPY/USD, and (g) the correlation of
 GBP/USD and SWF/USD.}\label{fig5}
\end{figure}

In this section we consider foreign exchange rates data (FX) of 8
currencies, namely Australian Dollar vs US Dollar (AUD/USD), British
Pound vs USD (GBP/USD), Canadian Dollar vs USD (CAD/USD), Dutch
Guilder vs USD (DUG/USD), French Franc vs USD (FRF/USD), German
DeutschMark vs USD (GDM/USD), Japan Yen vs USD (JPY/USD), and Swiss
Franc vs USD (SWF/USD). The data are sampled in daily frequency,
from January 1980 to December 1997 and these data are reported in
Franses and van Dijk (2000). We form the log-returns vector series
$y_t=(y_{1t},\ldots,y_{8t})'$, where $y_{1t}$ is the log-returns of
AUD/USD, $\ldots$, $y_{8t}$ is the log-returns of SWF/USD; the data
are plotted in Figure \ref{fig3}. To specify $\Omega$ we used the
log-likelihood criterion with the search procedure of Section
\ref{s2s2}. Using $q=2$, an optimal diagonal matrix $Z$ was
$Z=\textrm{diag}(0.44,0.54,0.56,0.87,0.92,0.52,0.99,0.77)$ and so
the diagonal $\Omega$ that maximizes the log-likelihood function is
$\Omega=\textrm{diag}(0.786,1.174,1.272,6.692,11.500,1.083,99.000,3.348)$,
for $\delta=0.7$, $\phi=1$, $m_0=0$ and $S_0=I_8$. This setting for
$\Omega$ reveals a clear benefit as opposed to a setting
$\Omega=wI_8$, for a known $w\geq 0$, as we can see that the
correlation matrix of $\epsilon_t$ and the correlation matrix of
$\omega_t$ are not the same. Indeed, at the posterior estimate
$S_{4773}^*$ of $\Sigma_{4773}$, we can see that the correlation
matrices of $\epsilon_t$ and $\omega_t$ differ significantly, with
the latter having larger correlations.

For this data set we observed that larger values of $\delta$ (in
particular values of $\delta$ in the range $0.9\leq \delta\leq
0.99$) can not capture the volatility shocks, returning large values
for the MSSE. The log-likelihood function, evaluated at the
posterior estimate of $\Sigma_t$, for the above optimal settings was
$-98438.78$. The four performance measures are
\begin{gather*}
\textrm{MSE}=(0.00009, 0.00002, 0.00001, 0.00031, 0.00244, 0.00026,
1.67707, 0.00022)', \\ \textrm{MSSE}=(0.933, 0.911, 0.980, 0.979,
0.948, 0.940, 0.924, 0.916)', \\ \textrm{MAD}=(0.006, 0.003, 0.003,
0.013, 0.036, 0.012, 0.915, 0.011)', \\ \textrm{ME}=(-3.14\times
10^{-7}, 7.53\times 10^{-7}, -2.73\times 10^{-6}, -2.32\times
10^{-6}, -5.77\times 10^{-6}, -1.82\times 10^{-6}, \\ -4.67\times
10^{-4}, -2.08\times 10^{-8})'
\end{gather*}
The MSSE is slightly under $(1,\ldots,1)'$, which means that the
volatilities are slightly over-estimated. However, looking at the
MSE, MAD, the ME and the log-likelihood function, we consider this
model as acceptable.

For $\Sigma_t=(\sigma_{ij,t})_{i,j=1,\ldots,8}$, Figure \ref{fig4}
shows the posterior volatilities of each of the $\sigma_{ii,t}$
$(i=1,\ldots,8)$, for the last 774 observations, i.e. from $t=4000$
until $t=4773$. Most of the volatilities are small, except for the
JPY/USD, but even for small volatilities Figure \ref{fig4} indicates
clearly the highly volatile periods for each exchange rate. Figure
\ref{fig5} shows the posterior correlations of GBP/USD with the
other rates. This figure confirms that the correlations are
time-varying. By inspecting Figure \ref{fig5} we observe that
GBP/USD is most correlated with DUG/USD, FRF/USD and JPY/USD, while
GBP/USD is least correlated (but still significantly correlated)
with AUD/USD and CAD/USD.

We note that for this data set, there were 4773 time points and for
the 8-dimensional time series $\{y_t\}$, the estimation algorithm
(see Theorems \ref{th3} and \ref{th2}), implemented in {\tt R},
(including the search procedure to maximize the log-likelihood
function) took about 8 minutes to run on a Pentium PC.

\section{Discussion}\label{discussion}

In this paper we have provided a Bayesian analysis for multivariate
stochastic volatility. We propose a generalization of the Wishart
and inverted Wishart distributions and we extend the convolution
between the Wishart and the multivariate singular beta
distributions. This generalization is motivated from the
multivariate random walk plus noise model, which innovation vectors
are desired to have different correlations. The proposed estimation
methodology is delivered in closed form and it is fast and easily
implementable, even for high dimensional data. The log-likelihood of
the volatility is obtained in closed form and this is an important
step forward on multivariate volatility estimation, quoting ``The
estimation of the canonical SV model and its various extensions was
at one time considered difficult since the likelihood function of
these models is not easily calculable.'' from Chib {\it et al.}
(2007). The availability of the log-likelihood function in closed
form allows more efficient model comparisons, e.g. via sequential
likelihood tests or via sequential Bayes' factors (Salvador and
Gargallo, 2004; Triantafyllopoulos, 2006). Moreover, the proposed
model develops a fast Bayesian algorithm not depending on
simulation-based estimation procedures and not requiring many
parameters to be estimated. In the special case where the volatility
of state vector is proportional to the volatility of the observation
vector, the analysis is exact and the inverse of the volatility
matrix follows a Wishart process.

The procedure proposed in this paper attempts to combine the
simplicity of non-iterative algorithms with the sophistication of
stochastic volatility procedures. Algorithms such as the one
developed here, are particularly attractive, because they can model
high dimensional data, with low computational cost, and still they
can enjoy the mathematical properties of closed estimation
procedures, which aim to address volatility estimation and
forecasting for a wide class of financial data.

\renewcommand{\theequation}{A-\arabic{equation}} 
\setcounter{equation}{0}  
\section*{Appendix}

\begin{proof}[Proof of Theorem \ref{lem1}]
Consider the transformation $Y=X^{1/2}S^{-1}X^{1/2}$. From Olkin and
Rubin (1964) the determinant of the Jacobian matrix of $X$ with
respect to $Y$ is $J(Y\rightarrow X)=J(Y\rightarrow
X^{1/2})J(X^{1/2}\rightarrow X)=\prod_{i\leq
j}^p(\lambda_i+\lambda_j)(\xi_i+\xi_j)^{-1}$, where
$\lambda_1,\ldots,\lambda_p$ are the eigenvalues of
$S^{-1/2}X^{1/2}S^{-1/2}$ and $\xi_1,\ldots,\xi_p$ are the
eigenvalues of $X^{1/2}$. We observe that if $A=I_p$, then $p(X)$ is
an inverted Wishart distribution, since
$\textrm{tr}(-X^{-1/2}SX^{-1/2}/2)=\textrm{tr}(-SX^{-1}/2)$. The
Jacobian $J(Y\rightarrow X)$ does not depend on $A$ and so we can
determine $J(Y\rightarrow X)$ from the special case of $A=I_p$. With
$A=I_p$, $X\sim \IW_p(n,S)$ and $Y\sim \IW_p(n,I_p)$ and from the
transformation $Y=X^{1/2}S^{-1}X^{1/2}$ we get
$$
p(Y)=\frac{|S|^{(n-p-1)/2}\textrm{etr}(-Y^{-1}/2)J(Y\rightarrow
X)}{2^{p(n-p-1)/2}\Gamma_p\{(n-p-1)/2\}|S|^{n/2}}.
$$
Since $Y\sim \IW_p(n,I_p)$ it must be
$|S|^{-n/2}|S|^{(n-p-1)/2}J(Y\rightarrow X)=1$ and so
$J(Y\rightarrow X)=|S|^{(p+1)/2}$.

Now, in the general case of a covariance matrix $A$, we see
$$
\int_{X>0} p(X)\,dX=\int_{Y>0}
\frac{|A|^{(n-p-1)/2}}{2^{p(n-p-1)/2}\Gamma_p\{(n-p-1)/2\}|Y|^{n/2}}
\textrm{etr}(-AY^{-1}/2)\,dY=1,
$$
since $Y\sim \IW_p(n,A)$.
\end{proof}

\begin{proof}[Proof of Theorem \ref{lem2a}]
First we prove (a). From the proof of Theorem \ref{lem1} we have that\\
$Y=X^{1/2}S^{-1}X^{1/2}\sim \IW_p(n,A)$ and so
$\E(Y)=(n-2p-2)^{-1}A$ and $\E(Y^{-1})=(n-p-1)A^{-1}$. Proceeding
with (b) we note from the proof of Theorem \ref{lem1} that for any
$n>2p$
$$
\int_{X>0}|X|^{-n/2}\textrm{etr}(-AX^{-1/2}SX^{-1/2}/2)\,dX=c^{-1},
$$
where $c$ is the normalizing constant of the distribution of $X$.
Then
$$
\E|X|^{\ell}=c\int_{X>0}|X|^{-(n-2\ell)/2}\textrm{etr}(-AX^{-1/2}SX^{-1/2}/2)\,dX=\frac{c}{c^*},
$$
where
$$
c^*=\frac{2^{p\ell}|A|^{(n-p-1)/2}|S|^{(n-p-1)/2}}{2^{p(n-p-1)/2}|A|^{\ell}|S|^{\ell}\Gamma_p\{(n-2\ell-p-1)/2\}}
$$
and the range of $\ell$ makes sure that $n-2\ell>2p$. The result
follows by eliminating the factor $2^{p(n-p-1)/2}$ in the fraction
$c/c^*$.
\end{proof}

\begin{proof}[Proof of Theorem \ref{th:commute}]
Suppose that $X\sim \GIW_p(n,A,S)$. From the normalizing constant of
the density $f(X)$ of Theorem \ref{lem1}, we can exchange the roles
of $|A|$ and $|S|$. And from $\textrm{tr}(-AX^{-1/2}S$ $\times
X^{-1/2}/2) =\textrm{tr}(-SX^{-1/2}AX^{-1/2}/2)$ we have that $X\sim
\GIW_p(n,S,A)$.
\end{proof}

In order to prove Theorem \ref{th:uhlig} we prove the somewhat more
general result in the following lemma.

\begin{lem}\label{lemma:uhlig}
Let $A_1\sim\W_p(m,I_p)$, $A_2=\sum_{j=1}^nY_jY_j'$, with
$Y_j\sim\N_p(0,I_p)$ and $H\sim\GW_p(m+n,A,S)$, where $A_1$, $Y_j$
$(j=1,\ldots,n)$ and $H$ are independent. Define $C=A_1+A_2$,
$B=\{\U(C)'\}^{-1}A_1\{\U(C)\}^{-1}$, $G=\U(H)'B\U(H)$ and
$D=H^{1/2}AH^{1/2}-G^{1/2}AG^{1/2}$. Then $C\sim\W_p(m+n,I_p)$,
$G\sim\N_p(0,S)$, where $C,G$ and $Z_j$ $(j=1,\ldots,n)$, are
independent.
\end{lem}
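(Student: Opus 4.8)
I read the (evidently mistyped) conclusion as asserting $C\sim\W_p(m+n,I_p)$, $G\sim\GW_p(m,A,S)$, and the mutual independence of $C$, $G$ and $D$, and I would establish these three facts in turn. The distribution of $C$ is immediate: $A_2=\sum_{j=1}^n Y_jY_j'\sim\W_p(n,I_p)$ (a singular Wishart when $n<p$) is independent of $A_1\sim\W_p(m,I_p)$, so $C=A_1+A_2\sim\W_p(m+n,I_p)$ by additivity of (possibly singular) Wishart matrices. Writing $C=\U(C)'\U(C)$, the matrix $B=\{\U(C)'\}^{-1}A_1\{\U(C)\}^{-1}$ is precisely the Choleski-based construction of the (possibly singular) multivariate beta, so $B\sim\B_p(m/2,n/2)$ and, crucially, $B$ is independent of $C$ (Olkin and Rubin, 1964; Uhlig, 1994). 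Since $H$ is independent of $(A_1,A_2)$ it is independent of $(B,C)$, and combined with $B\perp C$ this gives $B\perp(C,H)$; as $G=\U(H)'B\U(H)$ is a function of $(B,H)$, we obtain $G\perp C$ at once, and the independence of $G$ and $D$ will drop out of the distributional computation below.

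To identify the law of $G$ I would condition on $H$ and put $U=\U(H)$, so that $H=U'U$. The map $B\mapsto G=U'BU$ has Jacobian $|U|^{p+1}=|H|^{(p+1)/2}$; using the matrix $(U')^{-1}GU^{-1}$ and the identity $I_p-(U')^{-1}GU^{-1}=(U')^{-1}(H-G)U^{-1}$, the $\B_p(m/2,n/2)$ density of $B$ transforms (for $n\geq p$) into a conditional density of $G$ proportional to $|G|^{(m-p-1)/2}|H-G|^{(n-p-1)/2}|H|^{-(m+n-p-1)/2}$ on $0<G\leq H$, with the analogous Uhlig (1994) expression in the singular case. Multiplying by the $\GW_p(m+n,A,S)$ density of $H$ cancels the power $|H|^{(m+n-p-1)/2}$, leaving a joint density of $(G,H)$ of the shape $|G|^{(m-p-1)/2}|H-G|^{(n-p-1)/2}\,\textrm{etr}(\cdot)$, where the exponential term is inherited from the $\GW$ density of $H$. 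The decisive remaining step is to trade $H$ for the complementary variable $D=H^{1/2}AH^{1/2}-G^{1/2}AG^{1/2}$, which is exactly why $D$ is introduced: by the $\GW$--Wishart correspondence implicit in Theorem \ref{lem1} (so that $H^{1/2}AH^{1/2}$ and $G^{1/2}AG^{1/2}$ carry Wishart laws), the exponential factor separates into a part depending only on $G$ and a part depending only on $D$, while the Jacobian of $(G,H)\leftrightarrow(G,D)$, computed from the eigenvalue formulas of Olkin and Rubin (1964) exactly as in the proof of Theorem \ref{lem1}, should absorb the factor $|H-G|^{(n-p-1)/2}$ so that the joint density of $(G,D)$ splits as a $\GW_p(m,A,S)$ density in $G$ times a Wishart-type density in $D$. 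This yields $G\sim\GW_p(m,A,S)$ and $G\perp D$ simultaneously.

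The main obstacle is precisely this last reparametrisation: because the symmetric square root does not compose simply with the Choleski factor $\U(H)$, checking that the Jacobian of $(G,H)\leftrightarrow(G,D)$ really cancels $|H-G|^{(n-p-1)/2}$ and that what remains separates cleanly in $G$ and $D$ calls for some careful eigenvalue bookkeeping. Should that global change of variables prove too cumbersome, I would instead argue by induction on $n$: for $n=0$ one has $A_2=0$, $C=A_1$, hence $B=I_p$ and $G=H\sim\GW_p(m,A,S)$ trivially, and the inductive step peels a single rank-one term $Y_nY_n'$ off $A_2$, using the rank-one Choleski update of $\U(C)$ together with the recursive build-up of the multivariate beta to reduce successively to the rank-one convolution.
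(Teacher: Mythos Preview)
Your overall strategy --- factor the joint density so that it splits into pieces recognisable as the claimed marginals --- is the right one, and your observation that the exponential term in the $\GW$ density of $H$ separates as
\[
\textrm{etr}(-AH^{1/2}S^{-1}H^{1/2}/2)=\textrm{etr}(-AG^{1/2}S^{-1}G^{1/2}/2)\cdot\textrm{etr}(-S^{-1}D/2)
\]
by the very definition of $D$ is exactly what drives the argument. But the place where you stall is precisely where the paper's proof takes a different and much easier route, and your proposal does not actually resolve it.

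You propose to change variables from $(G,H)$ to $(G,D)$ at the matrix level; as you yourself note, the map $H\mapsto H^{1/2}AH^{1/2}$ involves the symmetric square root and does not interact cleanly with $\U(H)$, so the Jacobian is opaque. The paper bypasses this entirely by never treating $D$ as a primitive matrix variable. Instead it introduces explicit \emph{vector} variables
\[
Z_j=\U(H^{1/2}AH^{1/2})'\{\U(C)'\}^{-1}Y_j,\qquad j=1,\ldots,n,
\]
and checks that $D=\sum_{j=1}^n Z_jZ_j'$. The global transformation is then $(A_1,H,Y_1,\ldots,Y_n)\to(C,G,Z_1,\ldots,Z_n)$, which is linear in each $Y_j$ (with coefficient matrix depending on $H$ and $C$), so its Jacobian is elementary --- the paper quotes $|H|^{-n/2}|C|^{n/2}|A|^{-(p+1)/2}$, citing Uhlig (1994) and Theorem~\ref{lem1}. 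With that Jacobian in hand, and using $|A_1|=|C|\,|B|$, $|H|=|G|/|B|$, the joint density factorises directly into $p(C)\,p(G)\,p(Z_1,\ldots,Z_n)$, giving $C\sim\W_p(m+n,I_p)$, $G\sim\GW_p(m,A,S)$, $Z_j\sim\N_p(0,S)$ and their mutual independence in one stroke.

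Two further points. First, your stepwise argument for $G\perp C$ via the Olkin--Rubin/Uhlig fact $B\perp C$ is correct but unnecessary: the paper gets all independences at once from the single density factorisation. Second, your approach, even if the Jacobian could be managed, would deliver only the law of $D$ and $G\perp D$; it says nothing about the individual $Z_j$, whereas the lemma (and its role in Theorem~\ref{th:uhlig}) is naturally stated at the level of the $Z_j$. Your induction-on-$n$ fallback could in principle be made to work for the rank-one step, but again the vector parametrisation $Z_j$ makes the general $n$ no harder than $n=1$, so the induction is superfluous once you see that device.
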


\begin{proof}
The proof mimics the proof of Uhlig (1994). Define
$Z_j=\U(H^{1/2}AH^{1/2})'\{\U(C)'\}^{-1}Y_j$ and note that
$D=\sum_{j=1}^nZ_jZ_j'$. From Theorem \ref{lem1} and from Uhlig
(1994), the Jacobian $J(A_1,H,Y_1,\ldots,Y_n\rightarrow
C,G,Z_1,\ldots,Z_n)$ is $|H|^{-n/2}|C|^{n/2}|A|^{-(p+1)/2}$. Then,
the joint density function of $A_1,H,A_2$ can be written as
\begin{gather*}
p(A_1,H,A_2) = p(A_1) p(H) p(A_2) = \left\{ 2^{pm/2}
\Gamma_p(m/2)\right\}^{-1} \textrm{etr}(-A_1/2) |A_1|^{(m-p-1)/2} \\
\times \left[ 2^{p(m+n)/2} \Gamma_p\{(m+n)/2\} |S|^{(m+n)/2}
\right]^{-1} |A|^{(m+n)/2} \textrm{etr}(-AH^{1/2}S^{-1}H^{1/2}/2)
|H|^{(m+n-p-1)/2} \\ \times (2\pi)^{-pn/2} \textrm{etr}(-A_2/2)
|A|^{-(p+1)} (\,dA_1) (\,dH) (\,dY_1) \cdots (\,dY_n) \\ = \left[
2^{p(m+n)/2} \Gamma_p\{(m+n)/2\}\right]^{-1}
\textrm{etr}(-C/2)|C|^{(m+n-p-1)/2} \\ \times \left\{ 2^{pm/2}
\Gamma_p(m/2)|S|^{m/2} \right\}^{-1} |A|^{m/2}
\textrm{etr}(-AG^{1/2}S^{-1}G^{1/2}/2) |G|^{(m-p-1)/2} \\ \times
(2\pi)^{-pn/2} |S|^{-n/2} \textrm{etr}(-S^{-1}D/2) |A|^{(n-p-1)/2} =
p(C)p(G)p(D),
\end{gather*}
where $A_1=|C||B|$, $H^{1/2}AH^{1/2}=G^{1/2}AG^{1/2}+D$ and
$|H|=|G|/|B|$ are used.
\end{proof}

\begin{proof}[Proof of Theorem \ref{th:uhlig}]
The proof is immediate from Lemma \ref{lemma:uhlig}, after noticing
that with the definition of the multivariate singular beta
distribution (Uhlig, 1994), $B\sim\B_p(m/2,n/2)$.
\end{proof}

Let $A>0$ denote that the matrix $A$ is positive definite and let
$A>B$ denote that the matrices $A>0$ and $B>0$ satisfy $A-B>0$. The
following two lemmas are needed in order to prove the limit of
Theorem \ref{lem:limit}.

\begin{lem}\label{lem2}
If the $p\times p$ matrices $A,B>0$ satisfy $A> B$, then $A^{-1}<
B^{-1}$.
\end{lem}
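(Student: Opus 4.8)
The plan is to prove the order-reversing property of matrix inversion for positive definite matrices by reducing to the scalar case via a simultaneous diagonalization argument. First I would use the standard fact that two symmetric positive definite matrices $A$ and $B$ can be simultaneously diagonalized by a congruence: there exists an invertible $p\times p$ matrix $C$ such that $C'AC = I_p$ and $C'BC = \Lambda$, where $\Lambda = \textrm{diag}(\lambda_1,\ldots,\lambda_p)$ with each $\lambda_i>0$ (the $\lambda_i$ being the eigenvalues of $B^{-1/2}AB^{-1/2}$ or equivalently the generalized eigenvalues of the pencil $(A,B)$).

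Next I would translate the hypothesis $A>B$ through this congruence: since congruence by an invertible matrix preserves positive definiteness, $A-B>0$ is equivalent to $C'(A-B)C = I_p - \Lambda > 0$, i.e. $\lambda_i<1$ for all $i$. Then I would express the inverses in the same coordinates: from $C'AC=I_p$ we get $A^{-1} = C C'$, and from $C'BC=\Lambda$ we get $B^{-1} = C\Lambda^{-1}C'$. Hence $B^{-1}-A^{-1} = C(\Lambda^{-1}-I_p)C'$, and since $\lambda_i<1$ implies $\lambda_i^{-1}-1>0$ for each $i$, the diagonal matrix $\Lambda^{-1}-I_p$ is positive definite; congruence by the invertible $C$ then gives $B^{-1}-A^{-1}>0$, which is exactly $A^{-1}<B^{-1}$.

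I do not anticipate a serious obstacle here; the only point requiring a little care is to invoke the simultaneous congruence diagonalization correctly (it relies on $B>0$ so that $B^{1/2}$ is invertible, and on the spectral theorem applied to the symmetric matrix $B^{-1/2}AB^{-1/2}$). An alternative route, if one prefers to avoid the diagonalization lemma, would be to write $A^{-1}-B^{-1} = A^{-1}(B-A)B^{-1}$ and note that this is congruent to $B-A<0$ via the (generally non-symmetric) conjugation $X\mapsto A^{-1}X B^{-1}$ — but making the congruence symmetric requires rewriting $A^{-1}(B-A)B^{-1}$ in a manifestly symmetric form such as $B^{-1/2}\bigl(B^{1/2}A^{-1}B^{1/2} - I_p\bigr)B^{-1/2}$ after a further reduction, so the diagonalization approach above is cleaner and I would present that one.
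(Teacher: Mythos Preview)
Your argument is correct; simultaneous congruence diagonalization is a standard and clean way to establish this order-reversing property. One small slip in a parenthetical remark: with $C'AC=I_p$ and $C'BC=\Lambda$, the $\lambda_i$ are the eigenvalues of $A^{-1/2}BA^{-1/2}$ (equivalently the generalized eigenvalues of the pencil $(B,A)$), i.e.\ the \emph{reciprocals} of the eigenvalues of $B^{-1/2}AB^{-1/2}$. This identification plays no role in the logic of the proof itself, which only uses $\lambda_i>0$ and the equivalence $A>B \Leftrightarrow I_p-\Lambda>0$.

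As for comparison with the paper: the paper does not actually give a proof of this lemma --- it simply cites Horn and Johnson (1999). Your write-up therefore goes beyond what the paper provides; either the diagonalization argument you present or the alternative symmetric rewriting you sketch would serve perfectly well in place of the bare citation.
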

The proof of this lemma is given in Horn and Johnson (1999).

\begin{lem}\label{lem3}
If $P_{t}=R_{t}(R_{t}+I_p)^{-1}$, with $R_{t}=\phi^2
P_{t-1}+\Omega$, where $\Omega$ is a positive definite matrix and
$\phi$ is a real number, then the sequence of $p\times p$ positive
matrices $\{P_t\}$ is convergent.
\end{lem}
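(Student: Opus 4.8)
The plan is to show the sequence $\{P_t\}$ is monotone and bounded, hence convergent in the Loewner order, using Lemma \ref{lem2} repeatedly. First I would record the basic structural facts: since $\Omega>0$ and $P_{t-1}\geq 0$, each $R_t=\phi^2P_{t-1}+\Omega$ is positive definite, and $P_t=R_t(R_t+I_p)^{-1}=I_p-(R_t+I_p)^{-1}$ is positive definite with $0<P_t<I_p$ for all $t\geq 1$. This already gives a uniform bound $0<P_t<I_p$, so boundedness is free; the work is in the monotonicity.

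The key step is to prove that the map $F(P)=R(R+I_p)^{-1}$ with $R=\phi^2P+\Omega$ is monotone nondecreasing in the Loewner order: if $P'\geq P\geq 0$ then $F(P')\geq F(P)$. Indeed $P'\geq P$ gives $R'=\phi^2P'+\Omega\geq \phi^2P+\Omega=R$ (here $\phi^2\geq 0$ is what matters), hence $R'+I_p\geq R+I_p>0$, so by Lemma \ref{lem2} we get $(R'+I_p)^{-1}\leq (R+I_p)^{-1}$, and therefore $F(P')=I_p-(R'+I_p)^{-1}\geq I_p-(R+I_p)^{-1}=F(P)$. With monotonicity of $F$ in hand I would compare $P_1$ with $P_0=p_0I_p$. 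Two cases arise: either $P_1\geq P_0$ or $P_1\leq P_0$ need not hold for a general scalar $p_0$, so instead I would argue as follows. Apply $F$ to the trivial inequality $P_1>0$ to get $P_2=F(P_1)$ and $P_1=F(P_0)$; these are not yet comparable. The cleaner route is to use that $F$ maps the order interval $[0,I_p]$ into itself and is monotone, then invoke the following standard dichotomy argument.

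The main obstacle is establishing monotonicity of the sequence itself (as opposed to just monotonicity of $F$), because $P_0=p_0I_p$ is arbitrary and need not be comparable to $P_1$. I would resolve this by the Knaster--Tarski / monotone-iteration trick: consider separately the iteration started from $0$ and from $I_p$. Let $\underline P_0=0$ and $\overline P_0=I_p$, with $\underline P_t=F(\underline P_{t-1})$, $\overline P_t=F(\overline P_{t-1})$. Since $F(0)=\Omega(\Omega+I_p)^{-1}>0=\underline P_0$ and $F(I_p)<I_p=\overline P_0$, monotonicity of $F$ forces $\{\underline P_t\}$ to be nondecreasing and $\{\overline P_t\}$ nonincreasing, and both lie in $[0,I_p]$; a bounded monotone sequence of symmetric matrices converges (entrywise, equivalently in the Loewner order), say $\underline P_t\to \underline P$ and $\overline P_t\to\overline P$, and by continuity of $F$ both limits are fixed points of $F$. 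Finally, for the actual sequence with $P_0=p_0I_p$: since $0\leq P_1<I_p$, monotonicity of $F$ gives $\underline P_t\leq P_{t+1}\leq \overline P_t$ for all $t$ once we note $\underline P_0=0\leq P_1\leq I_p=\overline P_0$ and iterate. Hence $P_{t+1}$ is squeezed between two convergent sequences; if one checks (as Theorem \ref{lem:limit} will, via the explicit solution of the algebraic Riccati equation $\phi^2 P^2+(\Omega+(1-\phi^2)I_p)P-\Omega=0$) that $F$ has a \emph{unique} fixed point in $[0,I_p]$, then $\underline P=\overline P$ and therefore $P_t$ converges to that common value. The uniqueness of the fixed point is the only place the precise algebraic form is needed; monotonicity plus boundedness does the rest.
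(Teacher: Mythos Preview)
Your approach differs from the paper's and is in one respect more careful. The paper argues directly that $\{P_t\}$ is bounded (as you do) and \emph{monotone} in the Loewner order, supplying only the inductive step: if $P_{t-1}^{-1}>P_{t-2}^{-1}$ then $P_t^{-1}>P_{t-1}^{-1}$, and likewise with the inequalities reversed. You are right to flag that the base case is not automatic; indeed with $p=2$, $\phi=1$, $\Omega=\textrm{diag}(10,0.01)$ and $P_0=\tfrac12 I_2$ one finds $P_1-P_0$ indefinite, so the matrix sequence is not globally Loewner-monotone even though it converges (everything commutes with $\Omega$ and the recursion decouples into scalar ones, each of which is monotone in its own direction). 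The paper's argument is therefore really an eigenvalue-by-eigenvalue one, relying tacitly on the fact that $P_0=p_0I_p$ forces every $P_t$ to commute with $\Omega$ (cf.\ Lemma \ref{lem3post}); your sandwich construction sidesteps that reduction and works for any initial $P_0\geq 0$.

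The one gap in your route is a potential circularity: you defer uniqueness of the fixed point of $F$ on $[0,I_p]$ to Theorem \ref{lem:limit}, but in the paper that theorem \emph{invokes} Lemma \ref{lem3} to know the limit exists before solving the Riccati equation. To close the loop independently, observe that any fixed point $P=R(R+I_p)^{-1}$ with $R=\phi^2P+\Omega$ necessarily commutes with $R$ (being a rational function of it) and hence with $\Omega$; simultaneous diagonalisation then reduces uniqueness to the scalar equation $\phi^2 x^2+(\omega+1-\phi^2)x-\omega=0$, which has exactly one root in $(0,1)$ since the left side is $-\omega<0$ at $x=0$ and $1>0$ at $x=1$. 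With that supplied, your squeeze $\underline P_t\leq P_{t+1}\leq \overline P_t$ forces $\underline P=\overline P$ and the argument is complete and self-contained.
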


\begin{proof}
First suppose that $\phi=0$. Then $R_t=\Omega$, for all $t$, and so
$P_t=\Omega (\Omega+I_p)^{-1}$, which of course is convergent.

Suppose now that $\phi\neq 0$. It suffices to prove that $\{P_t\}$
is bounded and monotonic. Clearly, $0\leq P_{t}$ and since
$\phi^2>0$ and $\Omega$ is positive definite $0<P_{t}$, for all
$t>0$. Since $(R_{t}+I_p)^{-1}>0$,
$(R_{t}+I_p-R_{t})(R_{t}+I_p)^{-1}>0\Rightarrow
P_{t}=R_{t}(R_{t}+I_p)^{-1}<I_p$ and so $0<P_{t}<I_p$. For the
monotonicity it suffices to prove that, if $P_{t-1}^{-1}>
P_{t-2}^{-1}$ (equivalent $P_{t-1}^{-1}< P_{t-2}^{-1}$), then
$P_{t}^{-1}> P_{t-1}^{-1}$ (equivalent $P_{t}^{-1}< P_{t-1}^{-1}$).
From $P_{t-1}^{-1}> P_{t-2}^{-1}$ we have $P_{t-1}<
P_{t-2}\Rightarrow R_{t}< R_{t-1}\Rightarrow R_{t}^{-1}>
R_{t-1}^{-1}\Rightarrow
P_{t}^{-1}-P_{t-1}^{-1}=R_{t}^{-1}-R_{t-1}^{-1}> 0$, since
$P_{t}^{-1}=(R_{t}+I_p)R_{t}^{-1}=I_p+R_{t}^{-1}$. With an analogous
argument we have that, if $P_{t-1}^{-1}< P_{t-2}^{-1}$, then
$P_{t}^{-1}-P_{t-1}^{-1}< 0$, from which the monotonicity follows.
\end{proof}

\begin{lem}\label{lem3post}
Let $\{P_t\}$ be the sequence of Lemma \ref{lem3} and suppose that
$P_0=p_0I_p$, for a known constant $p_0>0$. Then, with $\Omega$ as
in Lemma \ref{lem3}, the limiting matrix
$P=\lim_{t\rightarrow\infty}P_t$ commutes with $\Omega$.
\end{lem}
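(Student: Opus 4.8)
\textbf{Proof proposal for Lemma \ref{lem3post}.}
The plan is to show that \emph{every} matrix $P_t$ in the sequence commutes with $\Omega$, and then pass to the limit, using that commutativity with a fixed matrix is a closed condition. Since Lemma \ref{lem3} already guarantees that $P=\lim_{t\to\infty}P_t$ exists, the only thing left to establish is the algebraic relation $P\Omega=\Omega P$.

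First I would argue by induction on $t$ that $P_t\Omega=\Omega P_t$ for all $t\ge 0$. The base case is immediate: $P_0=p_0I_p$ is a scalar multiple of the identity, hence commutes with $\Omega$. For the inductive step, assume $P_{t-1}\Omega=\Omega P_{t-1}$. Then $R_t=\phi^2P_{t-1}+\Omega$ commutes with $\Omega$, because both summands do; consequently $R_t+I_p$ commutes with $\Omega$, and therefore so does its inverse (from $(R_t+I_p)\Omega=\Omega(R_t+I_p)$ multiply on the left and right by $(R_t+I_p)^{-1}$, which exists since $R_t+I_p>0$ by Lemma \ref{lem3}). Finally $P_t=R_t(R_t+I_p)^{-1}$ is a product of two matrices each commuting with $\Omega$, and a product of matrices commuting with a fixed matrix again commutes with it; hence $P_t\Omega=\Omega P_t$, completing the induction. (The degenerate case $\phi=0$ is covered as well, but is in any case obvious since then $P_t=\Omega(\Omega+I_p)^{-1}$ for every $t\ge 1$.)

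To conclude, I would use continuity of matrix multiplication: since $P_t\to P$, we have
$$
P\Omega-\Omega P=\lim_{t\to\infty}\bigl(P_t\Omega-\Omega P_t\bigr)=\lim_{t\to\infty}0=0,
$$
so $P$ commutes with $\Omega$, as claimed. I do not anticipate any real obstacle here: the argument is essentially the observation that the commutant of $\Omega$ is a subalgebra of $p\times p$ matrices that is closed under inversion (of invertible elements) and topologically closed, and the recursion defining $P_t$ stays inside it once $P_0$ does. The only minor points to handle carefully are the invertibility of $R_t+I_p$ (supplied by Lemma \ref{lem3}) and the elementary fact that $M\Omega=\Omega M$ implies $M^{-1}\Omega=\Omega M^{-1}$.
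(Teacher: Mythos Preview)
Your proposal is correct and follows essentially the same approach as the paper: prove by induction (starting from $P_0=p_0I_p$) that every $P_t$ commutes with $\Omega$, then pass to the limit using continuity of matrix multiplication. The only cosmetic difference is that the paper carries out the inductive step via $P_t^{-1}=I_p+(\phi^2P_{t-1}+\Omega)^{-1}$ and checks $P_t^{-1}\Omega^{-1}=\Omega^{-1}P_t^{-1}$, whereas you argue directly that $R_t$, $R_t+I_p$, and $(R_t+I_p)^{-1}$ all lie in the commutant of $\Omega$; both arguments are equally valid.
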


\begin{proof}
First we prove that if $P_{t-1}$ commutes with $\Omega$, then
$P_{t}$ also commutes with $\Omega$. Indeed from $P_{t}=(\phi^2
P_{t-1}+\Omega)(\phi^2 P_{t-1}+\Omega+I_p)^{-1}$ we have that
$P_{t}^{-1}=I_p+(\phi^2 P_{t-1}+\Omega)^{-1}$ and then
$$
P_{t}^{-1}\Omega^{-1}=\Omega^{-1}+(\phi^2\Omega
P_{t-1}+\Omega^2)^{-1}=\Omega^{-1}+(\phi^2
P_{t-1}\Omega+\Omega^2)^{-1}=\Omega^{-1}P_{t}^{-1}
$$
which implies that $\Omega
P_{t}=(P_{t}^{-1}\Omega^{-1})^{-1}=(\Omega^{-1}P_{t}^{-1})^{-1}=P_{t}\Omega$
and so $P_{t}$ and $\Omega$ commute. Because $P_0=p_0I_p$, $P_0$
commutes with $\Omega$ and so by induction it follows that the
sequence of matrices $\{P_t, t\geq 0\}$ commutes with $\Omega$.
Since $P=\lim_{t\rightarrow\infty}P_t$ exists (Lemma \ref{lem3}) we
have
$$
P\Omega=\lim_{t\rightarrow\infty}(P_t\Omega)=\lim_{t\rightarrow\infty}(\Omega
P_t)=\Omega P
$$
and so $P$ commutes with $\Omega$.
\end{proof}

\begin{proof}[Proof of Theorem \ref{lem:limit}]
From Lemma \ref{lem3} we have that $P$ exists and from Lemma
\ref{lem3post} we have that $P$ and $\Omega$ commute. From
$P_{t}=(\phi^2P_{t-1}+\Omega)(P_{t-1}+\Omega+I_p)^{-1}$ we have
$P=(\phi^2P+\Omega)(\phi^2P+\Omega+I_p)^{-1}$ from which we get the
equation $P^2+\phi^{-2}P(\Omega+I_p-\phi^2I_p)-\phi^{-2}\Omega=0$.
Now since $P$ and $\Omega$ commute we can write
\begin{gather*}
P^2+\phi^{-2}P(\Omega+I_p-\phi^2I_p)-\phi^{-2}\Omega=0 \Rightarrow
P^2+\frac{1}{2\phi^2}P(\Omega+(1-\phi^2)I_p) \\
+\frac{1}{2\phi^2}(\Omega+(1-\phi^2)I_p)
P+\frac{1}{4\phi^4}(\Omega+(1-\phi^2)I_p)^2-
\frac{1}{4\phi^4}(\Omega+(1-\phi^2)I_p)^2-\Omega=0\\
\Rightarrow \left(
P+\frac{1}{2\phi^2}(\Omega+(1-\phi^2)I_p)\right)^2=\frac{1}{4\phi^4}(\Omega+(1-\phi^2)I_p)^2+\Omega
\\ \Rightarrow P=\frac{1}{2\phi^2}\left[
\left\{(\Omega+(1-\phi^2)I_p)^2+4\Omega\right\}^{1/2}-\Omega-(1-\phi^2)I_p\right],
\end{gather*}
after rejecting the negative definite root.
\end{proof}

\begin{proof}[Proof of Theorem \ref{th3}]
The proof is inductive in the distribution of $\Sigma_{t}|y^{t}$.
Assume that given $y^{t-1}$ the distribution of $\Sigma_{t-1}$ is
$\Sigma_{t-1}|y^{t-1}\sim \GIW((1-\delta)^{-1}+2p,Q^{-1},S_{t-1})$
and so
$\Sigma_{t-1}^{-1}|y^{t-1}\sim\GW_p((1-\delta)^{-1}+p-1,Q,S_{t-1}^{-1})$.
From the evolution (\ref{evol}) and Theorem \ref{th:uhlig}, we have
$\Sigma_{t}^{-1}|y^{t-1}\sim\GW_p(\delta(1-\delta)^{-1},Q,kS_{t-1}^{-1})$,
which proves that
$\Sigma_{t}|y^{t-1}\sim\GIW_p(\delta(1-\delta)^{-1},Q^{-1},k^{-1}S_{t-1})$.

From the Kalman filter, conditionally on $\Sigma_{t}$, the one-step
forecast density of $y_{t}$ is
$$
y_{t}|\Sigma_{t},y^{t-1}\sim
\N_p(m_{t-1},\Sigma_{t}^{1/2}Q_{t-1}(1)\Sigma_{t}^{1/2})\approx
\N_p(m_{t-1},\Sigma_{t}^{1/2}Q\Sigma_{t}^{1/2}),
$$
where $m_{t-1}$, $Q_{t-1}(1)$ and $Q$ are as in the theorem.

Given $y^{t-1}$ the joint distribution of $y_{t}$ and $\Sigma_t$ is
\begin{eqnarray}
p(y_{t},\Sigma_{t}|y^{t-1})&=&p(y_{t}|\Sigma_{t},y^{t-1})p(\Sigma_{t}|y^{t-1})\nonumber\\
&=&c_1\frac{\textrm{etr}\{-Q^{-1}\Sigma_{t}^{-1/2}
(e_{t}e_{t}'+k^{-1}S_{t-1})\Sigma_{t}^{-1/2}/2\}}{|\Sigma_{t}|^{(n+1)/2}},\label{eq9}
\end{eqnarray}
where $n=\delta (1-\delta)^{-1}+2p$ and
$$
c_1=\frac{|k^{-1}S_{t-1}|^{(n-p-1)/2}}{(2\pi)^{\pi/2}2^{p(n-p-1)}|Q|^{(n-p)/2}
\Gamma_p\{(n-p-1)/2\}}.
$$
The one-step forecast density of $y_{t}$ is
\begin{eqnarray*}
p(y_{t}|y^{t-1})&=&\int_{\Sigma_{t}>0} p(y_{t},\Sigma_{t}|y^{t-1})\,d\Sigma_{t}\\
&=& c_1 \int_{\Sigma_{t}>0} |\Sigma_{t}|^{-(n+1)/2}
\textrm{etr}\{-Q^{-1}\Sigma_{t}^{-1/2}(e_{t-1}e_{t-1}'+k^{-1}\Sigma_{t})\Sigma_{t}^{-1/2}/2\}
\,d\Sigma_{t} \\ &=& c_1 \frac{2^{p(n-p)/2} \Gamma_p\{(n-p)/2\} }{
|Q|^{-(n-p)/2} |e_{t-1}e_{t-1}'+k^{-1}S_{t-1}|^{(n-p)/2} } \\ &=&
\frac{\Gamma_p\{(n-2p+p)/2\} } { \pi^{p/2} \Gamma_p\{(n-2p+p-1)/2\}
} |k^{-1}S_{t-1}|^{(n-2p+p-1)/2}
|e_{t-1}e_{t-1}'+k^{-1}S_{t-1}|^{-(n-2p+p)/2},
\end{eqnarray*}
and so $y_{t}|y^{t-1}\sim\T_p (\delta (1-\delta)^{-1}, m_{t-1},
k^{-1}S_{t-1})$, as required. This completes (a).

Proceeding with (b) first we derive the distribution of
$\Sigma_{t}|y^{t}$. Applying the Bayes' theorem we have
$$
p(\Sigma_{t}|y^{t})=\frac{p(y_{t}|\Sigma_{t},y^{t-1})p(\Sigma_{t}|y^{t-1})}{p(y_{t}|y^{t-1})}
$$
and from equation (\ref{eq9}) we have
$$
p(\Sigma_{t}|y^{t})=c_2|\Sigma_{t}|^{-n^*/2}\textrm{etr}(-Q^{-1}\Sigma_{t}^{-1/2}S_{t}
\Sigma_{t}^{-1/2}/2)
$$
and
$$
n^*=n+1=\frac{\delta}{1-\delta}+2p+1=\frac{1}{1-\delta}+2p,
$$
where $S_{t}$ is as in the theorem and the proportionality constant
is $c_2=c_1/p(y_{t}|y_{t-1})$, not depending on $\Sigma_{t}$. Thus
$\Sigma_{t}|y^{t}\sim \GIW_p((1-\delta)^{-1}+2p,Q^{-1},S_{t})$ as
required. Conditionally on $\Sigma_{t}$, the distribution of
$\theta_{t}$ follows directly from application of the Kalman filter
and so applying the approximation $\Sigma_{t}\approx S_{t}^*$, with
$S_t^*$ as in the theorem, provides the required posterior
distribution of $\theta_{t}$.
\end{proof}

Before we prove Theorem \ref{th2}, we give the following lemma.

\begin{lem}\label{lemma}
Suppose that the $p\times p$ matrix $B$ follows the singular
multivariate beta distribution $B\sim\B_p(m/2,n/2)$, with density
$$
p(B)= \pi^{(n^2-pn)/2}
\frac{\Gamma_p\{(m+n)/2\}}{\Gamma_n(n/2)\Gamma_p(m/2)}
|K|^{(n-p-1)/2} |B|^{(m-p-1)/2},
$$
where $n$ is a positive integer, $m>p-1$, $I_p-B=H_1KH_1'$, $K$ is
the diagonal matrix with diagonal elements the positive eigenvalues
of $I_p-B$, and $H_1$ is a matrix with orthogonal columns, i.e.
$H_1H_1'=I_p$. For any non-singular matrix $A$, the density of
$X=AB^{-1}A'$, is
$$
p(X)=\pi^{(n^2-pn)/2}
\frac{\Gamma_p\{(m+n)/2\}}{\Gamma_n(n/2)\Gamma_p(m/2)} |A|^{n+m-p-1}
|L|^ {-(p-n+1)/2} |X|^{-(m-p-1)/2},
$$
where $L$ is the diagonal matrix including the positive eigenvalues
of $I_p-A'X^{-1}A$.
\end{lem}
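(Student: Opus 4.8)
The plan is to derive $p(X)$ from $p(B)$ by the change of variables $X=AB^{-1}A'$, equivalently $B=A'X^{-1}A$. First I would record the elementary identities that re-express $p(B)$ in the variable $X$. Since $X=AB^{-1}A'$ gives $A'X^{-1}A=B$, the matrix $I_p-A'X^{-1}A$ is literally $I_p-B$; therefore $L$ and $K$ are the same diagonal matrix of positive eigenvalues of $I_p-B$, so $|K|=|L|$, and because $(n-p-1)/2=-(p-n+1)/2$ the factor $|K|^{(n-p-1)/2}$ becomes $|L|^{-(p-n+1)/2}$ with no further work. Moreover $|B|=|A'X^{-1}A|=|A|^2|X|^{-1}$, so $|B|^{(m-p-1)/2}=|A|^{m-p-1}|X|^{-(m-p-1)/2}$, while the normalizing constant is unchanged. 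Substituting these into $p(B)$ reproduces every factor of the asserted $p(X)$ except one extra power $|A|^n$; hence the lemma is equivalent to the Jacobian identity $(dB)=|A|^n(dX)$.

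To prove that identity I would use the $p\times n$ square-root parametrization of the rank-$n$ defect. Write $I_p-B=YY'$ with $Y$ a $p\times n$ matrix (determined up to an orthogonal $n\times n$ factor), so $B=I_p-YY'$, and by the Sherman--Morrison--Woodbury formula $B^{-1}=I_p+Y(I_n-Y'Y)^{-1}Y'$, whence $X=AB^{-1}A'=AA'+TT'$ with $T=AY(I_n-Y'Y)^{-1/2}$. The map $Y\mapsto T$ factors as $Y\mapsto Y_*:=Y(I_n-Y'Y)^{-1/2}$ followed by left multiplication $Y_*\mapsto AY_*$; the latter is linear on $p\times n$ matrices, with Jacobian $|A|^n$, and the additive shift by $AA'$ contributes $1$. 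The remaining map $Y\mapsto Y_*$, together with the Jacobians linking $Y$ to $B$ and $T$ to $X$ and the eigenvalue factors entering the natural measures on these singular manifolds, is free of $A$; invoking the Jacobian calculus for singular matrices (Uhlig, 1994; D\'{i}az-Garc\'{i}a and Guti\'{e}rrez, 1997), these $A$-free contributions cancel, leaving $(dB)=|A|^n(dX)$.

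The main obstacle is the bookkeeping of the singular measures: $B$ is of full rank but lives on the lower-dimensional manifold on which $I_p-B$ has rank $n$, so the ordinary Lebesgue Jacobian for symmetric matrices is not the right object, and one must carry the Stiefel/eigenvalue Jacobian factors correctly through the inversion $B\mapsto B^{-1}$ and the congruence $C\mapsto ACA'$. Verifying that all the $A$-independent pieces cancel, leaving exactly $|A|^n$, is the delicate step; it is, however, strongly constrained, since the elementary substitutions above already pin down every other factor of $p(X)$, so the Jacobian can only be $|A|^n$.
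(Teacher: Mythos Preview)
Your proposal is correct and, at its core, follows the same route as the paper: both arguments are a change of variables $B=A'X^{-1}A$ together with the singular-matrix Jacobian from D\'{i}az-Garc\'{i}a and Guti\'{e}rrez (1997). The paper simply quotes that Jacobian in the form $(dB)=|K|^{(p-n+1)/2}|L|^{-(p-n+1)/2}|A|^n\,(dX)$ and substitutes, letting the $|K|^{(p-n+1)/2}$ cancel against the $|K|^{(n-p-1)/2}$ already present in $p(B)$ so that only $|L|^{-(p-n+1)/2}$ survives. You make the sharper observation that $A'X^{-1}A=B$ forces $I_p-A'X^{-1}A=I_p-B$ and hence $K=L$ outright, so the Jacobian collapses to $(dB)=|A|^n\,(dX)$ before one even touches the density; this is a genuine, if small, simplification of the paper's bookkeeping. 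Your square-root parametrization $I_p-B=YY'$, $X-AA'=TT'$ is an attempt to \emph{derive} that Jacobian rather than cite it, which is more ambitious, but---as you acknowledge---the cancellation of the $A$-free Stiefel and eigenvalue factors is exactly the delicate step, and you end up invoking the same references for it. So the two proofs coincide at the point where the real work is done; yours just arrives there with slightly cleaner hands.
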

\begin{proof}
First note that $X$ is a non-singular matrix and
$|B|=|A|^2|X|^{-1}$. From D\'{i}az-Garc\'{i}a and Guti\'{e}rrez
(1997), the Jacobian of $B$ with respect to $X$ is
$$
(\,dB)=|K|^{(p-n+1)/2} |L|^{-(p-n+1)/2} |A|^n (\,dX),
$$
where $K$ is defined as in the theorem. Then from the singular
multivariate beta density of $B$ we obtain
\begin{eqnarray*}
p(X)&=&\pi^{(n^2-pn)/2}
\frac{\Gamma_p\{(m+n)/2\}}{\Gamma_n(n/2)\Gamma_p(m/2)} |A|^n
|K|^{(n-p-1)/2} |B|^{(m-p-1)/2} \\ && \times |K|^{(p-n+1)/2}
|L|^{-(p-n+1)/2},
\end{eqnarray*}
from which we immediately get the required density of $X$.
\end{proof}

\begin{proof}[Proof of Theorem \ref{th2}]
The likelihood function is
\begin{equation}\label{logl1}
L(\Sigma_1,\ldots,\Sigma_N;y^N)=p(y_1|\Sigma_1)p(\Sigma_1|\Sigma_0)\prod_{t=2}^N
p(y_t|\Sigma_t,y^{t-1})p(\Sigma_t|\Sigma_{t-1},y^{t-1})
\end{equation}
and from the Kalman filter we have
$y_t|\Sigma_t,y^{t-1}\sim\N_p(m_{t-1},\Sigma_t^{1/2}Q\Sigma_t^{1/2})$,
where $Q_{t-1}(1)\approx Q$. The density
$p(\Sigma_t|\Sigma_{t-1},y^{t-1})$ is the density $p(X)$ of Lemma
\ref{lemma} with $A=k^{-1/2}\{\U(\Sigma_{t-1}^{-1})^{-1}\}'$,
$\Sigma_t^{-1}=\U(\Sigma_t^{-1})'\U(\Sigma_t^{-1})$,
$m=\delta(1-\delta)^{-1}+p-1$ and $n=1$. The required formula of the
log-likelihood function is obtained from (\ref{logl1}) by taking the
logarithm of $L(\Sigma_1,\ldots,\Sigma_N;y^N)$.
\end{proof}

\end{document}